\theoremstyle{definition}
\newtheorem{example}{Example}[section]
\newtheorem{definition}{Definition}[section]
\newtheorem{theorem}{Theorem}[section]
\newtheorem*{theorem*}{Theorem}
\newtheorem{lemma}{Lemma}[section]
\newtheorem*{lemma*}{Lemma}
\newtcolorbox{authorComment}[1]{colback=#1}
\begin{document}
	\title{Logic of Differentiable Logics: Towards a Uniform Semantics of DL}
	\author{Natalia \'{S}lusarz\inst{1} \and Ekaterina Komendantskaya\inst{1} \and Matthew L. Daggitt\inst{1} \and Robert Stewart\inst{1} \and Kathrin Stark\inst{1}}
	\institute{
		Heriot-Watt University, Edinburgh, United Kingdom \\
		nds1@hw.ac.uk
	}
\authorrunning{\'{S}lusarz, Komendantskaya, Daggitt, Stewart and Stark}
\titlerunning{LDL: the Logic of Differentiable Logics}
	\maketitle
	
\begin{abstract}

Differentiable logics (DL) have recently been proposed as a method of training neural networks to satisfy logical specifications. A DL consists of a syntax in which specifications are stated and an interpretation function that translates expressions in the syntax into loss functions.
These loss functions can then be used during training with standard gradient descent algorithms. 
The variety of existing DLs and the differing levels of formality with which they are treated makes a systematic comparative study of their properties and implementations difficult. 
This paper remedies this problem by suggesting a meta-language for defining DLs that we call the Logic of Differentiable Logics, or LDL. Syntactically, it generalises the syntax of existing DLs to FOL, and for the first time introduces the formalism for reasoning about vectors and learners. 
Semantically, it introduces a general interpretation function that can be instantiated to define loss functions arising from different existing DLs.    
We use LDL to establish several theoretical properties of existing DLs and to conduct their empirical study in neural network verification. 

\textbf{Keywords:} Differentiable Logic, Fuzzy Logic, Probabilistic Logic, Machine Learning, Training with Constraints, Types. 
\end{abstract}
\section{Introduction}
\label{sec:introduction}

Logics for reasoning with uncertainty and probability have long been known and used in programming: e.g. fuzzy logic~\cite{zadeh1965fuzzy}, probabilistic logic~\cite{nilsson1986probabilistic} and variants of thereof in logic programming domain~\cite{de2015probabilistic, lukasiewicz1998probabilistic, ng1992probabilistic, riguzzi2022foundations}. Recently, rising awareness of the problems related to \emph{machine learning verification} opened a novel application area for those ideas.
\emph{Differentiable Logics} (DLs) is a family of methods that applies key insights from fuzzy logic and probability theory to enhance this domain with property-driven learning~\cite{giunchiglia2022deep}. 

As a motivating example, consider the problem of verification of neural networks. A neural network is a function
$f: \Real^n \rightarrow \Real^m$ parametrised by a set of weights $\ws$. A training dataset~$\dataset$ is a set of pairs~$(\x,\y)$ consisting of an input $\x \in \Real^n$ and the desired output $\y \in \Real^m$. It is assumed that the outputs $\y$ are generated from $\x$ by some function~$\hypothesis : \Real^n \rightarrow \Real^m$ and that $\x$ is drawn from some probability distribution over $\Real^n$.
The goal of training is to use the dataset~$\dataset$ to find weights~$\ws$ such that~$f$ approximates~$\hypothesis$ well over input regions with high probability density.
 The standard approach is to use a \emph{loss function} $\lossfn$, that given a pair $(\x, \y)$ calculates how much $f(\x)$ differs from the desired output $\y$. 
 Gradients of $\lossfn$ with respect to the network's weights can then be used to update the weights during training.  

However in addition to the dataset $\dataset$, in certain problem domains we may have additional information about $\hypothesis$ in the form of a mathematical property $\property$ that we know $\hypothesis$ must satisfy. 
Given that $\hypothesis$ satisfies property $\property$, we would like to ensure that $f$ also satisfies $\property$. 
A common example of such a property is that $\hypothesis$ should be \emph{robust}, i.e.~small perturbations to the input only result in small perturbations to the output. For example, in image classification tasks changing a single pixel should not significantly alter what the image is classified as~\cite{CKDKKAE22,szegedy2013intriguing}.
\begin{definition}[$\epsilon$-$\delta$-robustness]
\label{def:robustness}
Given constants $\epsilon, \delta \in \Real$, a function $f$ is \emph{$\epsilon$-$\delta$-robust} around a point $ \hat{\x} \in \Real^n$ if:
\begin{equation}
\forall \x \in \Real^n : ||\x - \hat{\x}|| \leq \epsilon \implies || f(\x) - f(\hat{\x})|| \leq \delta
\end{equation}
\end{definition}
\noindent 
\citet{CKDKKAE22} formulate several possible notions of robustness. Throughout this paper, we will use the term ``robustness property" to refer to the above definition.
The problem of verifying the robustness of neural networks has received significant attention from the verification community~\cite{SinghGPV19,WangZXLJHK21}, and it is known to be difficult both theoretically~\cite{KaBaDiJuKo17Reluplex} and  in practice~\cite{BaLiJo21}. 
 However, even leaving aside the challenges of undecidability of non-linear real arithmetic~\cite{KKKAA20}, and scalability~\cite{WangZXLJHK21}, the biggest obstacle to verification is that the majority of neural networks do not succeed in learning $\property$ from the training dataset $\dataset$~\cite{giunchiglia2022deep, wang2018efficient}.
 
The concept of a \emph{differentiable logic} (DL) was introduced to address this challenge by verification-motivated training. This idea is sometimes referred to as continuous verification~\cite{KKK20,CKDKKAE22}, referring to the loop between training and verification.
The key idea in differentiable logic is to use $\property$ to generate an alternative \emph{logical loss function} $\lossfn_\property$, that calculates a penalty depending on how much $f$ deviates from satisfying $\property$. When combined with the standard data-driven loss function $\mathcal{L}$, the network is trained to both fit the data and satisfy $\property$. A DL therefore has two components: a suitable language for expressing the properties and an interpretation function that can translate expressions in the language into a suitable loss function.
 
Although the idea sounds simple, developing good principles of DL design has been surprisingly challenging. 
The machine-learning community has proposed several DLs such as DL2 for supervised learning~\cite{fischer2019dl}, and a signal temporal logic based-DL for reinforcement learning (STL)~\cite{varnai}. 
However, both approaches had shortcomings from the perspective of formal logic: 
the former fell short of introducing quantifiers as part of the language (and thus modelled robustness semi-formally in all experiments), 
and the latter only covered propositional fragment (which is not sufficient to reasoning about robustness).

Solutions were offered from the perspective of formal logic. In ~\cite{KriekenAH20,VANKRIEKEN2022103602} it was shown that one can use various propositional fuzzy logics to create loss functions. However, these fuzzy DLs did not stretch to cover the features of the DLs that came from machine learning community, and in particular could not stretch to express the above robustness property (Definition~\ref{def:robustness}), which needs not just quantification over infinite domains (quantifiers for finite domains were given in~\cite{KriekenAH20}), but also a formalism to express properties concerning vectors and functions over vectors. 

The first problem is thus: \emph{There does not exist a DL that formally covers a sufficient fragment of first-order logic to express key properties in machine learning verification, such as robustness}.    

The second problem has to do with formalisation of different DLs. 
\emph{In many of the existing DL approaches  syntax, semantics and pragmatics are not well-separated, which inhibits their formal analysis.} We have already given an example of DL2 treating quantifiers empirically outside of the language.
But the problem runs deeper than modelling quantifiers. To illustrate, let us take a fragment of syntax on which all DLs are supposed to agree. 
It will give us a toy propositional language $$a := p\ |\ a \wedge a$$
 
They assume that each propositional variable $p$ is interpreted in a domain $\domainSymbol \subseteq \Real$. The domains vary vastly across the DLs (from fuzzy set $[0,1]$ to $(\infty, \infty)$ in STL) and the choice of a domain can have important ramifications for both the syntax and the semantics.  For example DL2's domain $[0, \infty]$ severely restricts the translation of negation compared to other DLs. 
Conjunction is interpreted by $+$ in DL2~\cite{fischer2019dl}, by $min$, $\times$ and other more complex operations in different fuzzy DLs in \citet{VANKRIEKEN2022103602}.  
 In STL~\cite{varnai}, in order to make loss functions satisfy a \emph{shadow-lifting property}, 
  the authors propose a complex formula computing conjunction, that includes natural exponents alongside other operations.
   But this forces them to redefine the syntax for conjunction and thus obtain a different language $$a := p | \bigwedge_M (a_1,\ldots,a_M)$$  where $\bigwedge_M$ denotes a (non-associative!) conjunction for $M$ elements. 

As consequence of the above two problems, the third problem is \emph{lack of unified, general syntax and semantics able to express multiple different DLs and modular on the choice of DL, that would make it possible to choose one best suited for concrete task or design new DLs in an easy way.} 

In this paper, we propose a solution to all of these problems. The solution comes in a form of a meta-DL, which we call  \emph{a logic of differentiable logics (LDL)}. On the syntactic side (Section~\ref{sec:syntax}),
it is a typed first-order language with negation and universal and existential quantification that can express properties of functions and vectors.
  
 On the semantic side (Section~\ref{sec:semantics}), interpretation is defined to be parametric on the choice of the interpretation domain $\domainSymbol$ or a particular choice of logical connectives. 
  This parametric nature of interpretation simplifies both the theoretical study and implementations that compare different DLs. Moreover, the language has an implicit formal treatment of neural networks via a special kind of context -- a solution that we found necessary in achieving a sufficient level of generality in the semantics. For the first time the semantics formally introduces the notion of a probability distribution that corresponds to the data from which the data is assumed to be drawn. 
  We demonstrate the power of this approach by using LDL to 
  prove soundness of various DLs in Sections~\ref{subsec:soundness}, \ref{subsec:soundness2}  and systematically compare their geometric properties in Section~\ref{sec:properties}.  
  In Section~\ref{sec:eval} we use LDL 
  to provide a uniform empirical comparison of performance of all mentioned DLs 
  on improving robustness.

\section{Background}

In the previous section, we informally introduced the notions of probability distributions and loss functions. 
We now formally define these. In what follows, quantities such as $\x$ which are written using a bold font refer to vectors and the notation $x_i$ refers to the $i^{th}$ element of $\x$. 

\subsection{Probability Distributions and Expectations}
\label{sec:datasets-and-classifiers}

Following the standard definitions~\cite{Goodfellow-et-al-2016,roussas2003introduction}, we start by considering a random variable $X$ that ranges over some domain $D$.  The probability distribution for $X$ characterises how probable it is for a sample from it to take a given value in the domain $D$.  Depending on whether $D$ is discrete or continuous, the variable is called \emph{discrete or continuous random variable}, respectively. 
Formally, given a continuous random variable $X$ with domain $D$, \emph{the probability distribution function (PDF) $ \distr_X : D \rightarrow [0,1]$} 
is a function that satisfies: $ \int_{D} \distr_{X}(x) dx = 1$.

This definition can be generalised to random vector variables $\X$ over the domain $\D = D_1 \times \ldots \times D_n$ as follows.
Each element of the vector is assumed to be drawn from a random variable $X_i$ over domain $D_i$. The \emph{joint PDF} is a function
$\distr_{X_1, \ldots, X_n} : D_1 \times \ldots \times D_n \rightarrow [0,1]$ that satisfies:
\begin{equation*}
\int_{D_1} \ldots \int_{D_n} \distr_{X_1, \ldots, X_n}(x_1 , \ldots , x_n) dx_n \ldots dx_1 = 1
\end{equation*}
For brevity we will write $\distr_\X$ instead of $\distr_{X_1, \ldots, X_n}$ and $\int_\D$ and  $d\x$ rather than the full integral above.
%
%

%



Given a function $g:  \Real^n \rightarrow \Real$, we can calculate an average value that $g$ takes on $\X$, given a probability distribution $\distr_{\X}$.
Formally $\mathbb{E}[g(\X)]$, the \emph{expected value} for $g: \Real^n \rightarrow \Real $ over the random variable $\X$, is defined as:




\begin{equation}
	\mathbb{E}[g(\X)] = \int_\D \distr_{\X}(\x)  g(\x) d\x.
	\label{eq:vector-integral}
\end{equation}
%
%
Throughout the paper we assume that different random variables are independent. 

\subsection{Loss Functions}
\label{sec:loss-functions}
%


In standard machine learning training, given a neural network $f$, a loss function $\losssymbol: \Real^n \times \Real^m \rightarrow \Real$ computes a penalty proportional to the difference between the output of $f$ on a training input~$x$ and a desired output~$y$. There has been a large number of well performing loss functions proposed, with the most popular in classification tasks being cross-entropy loss~\cite{nar2019crossentropy, wang2020comprehensive}:

	Given a classifier  ${f: \Real^n \rightarrow [0,1]^m}$, the cross-entropy loss $\losssymbol_{ce}$ is defined as 
	\begin{equation}\label{eq:ce}
	\losssymbol_{ce}(\x, \y) = - \sum_{i=1}^{m} \y_i \log(f(\x)_i)
	\end{equation}
	where $\y_i$ is the true probability of $\x$ belonging to class $i$ and $f(\x)_i$ the probability for class $i$ as predicted by $f$ when applied to $\x$.

%
%
However, this notion of a loss function is not expressive enough to capture loss functions generated by DLs. Firstly, the property $\property$ may depend on other parameters apart from a labelled input/output pair. For example, the definition of robustness in Definition~\ref{def:robustness}, depends not only on some input $\hat{x}$ but also on parameters $\epsilon$ and $\delta$. Secondly, properties may relate more than one neural network, for example, in student-teacher scenarios~~\cite{hinton2015distilling}.
Therefore we generalise our notion of a loss function as follows:
\begin{definition}[Generalised loss function]
	\label{eq:loss-function}
	Given a set of neural networks $\hNetCtx$ and parameters $\hVarCtx$, a \emph{loss function} $\losssymbol^{\hNetCtx, \hVarCtx}$ is any function of type $\hNetCtx \times \hVarCtx \rightarrow \Real$.
\end{definition}
\noindent Note that we recover cross-entropy loss from this definition, by setting $\hNetCtx = \Real^n \rightarrow \Real^m$ and $\hVarCtx = {\Real^n \times \Real^m}$, whereas in a teacher-student scenario it might be something like $\hNetCtx = (\Real^n \rightarrow \Real^m) \times (\Real^n \rightarrow \Real^m)$ and $\hVarCtx = {\Real^n \times \Real}$. We distinguish between the neural networks and the other parameters, because during training we will differentiate the loss function with respect to the former but not the latter.

\section{Syntax and Type-system of LDL}
\label{sec:syntax}

Figure \ref{fig:syntax} formally defines the syntax of LDL. The first line of the definition of the set  $\langle expr \rangle$ of expressions includes:
bound variables~$x$, neural network variables~$f$ and constants of types $\RealType$, $\text{Index}$ and $\BoolType$. 
The second line of $\langle expr \rangle$ defines the standard syntax for lambda functions, applications and let bindings to facilitate language modularity. In this aspect the syntax is richer then any of the existing DLs.

The third and fourth lines contain standard operations on Booleans (\AndSymbol{}, \OrSymbol{}, \NotSymbol{}, \ImpliesSymbol{}) and Reals  (\AddSymbol{}, \NegSymbol{}, \MulSymbol{}, \GeqSymbol{}, $==$, ...).
The fifth line contains vector operations: \Seq{\exprClass}{\exprClass} is used to construct vectors, and the lookup operation $\AtSymbol{}$ retrieves the value of a vector at the provided index, i.e. $v\ !\ i$ accesses the value at the $i^{th}$ position in the vector $v$.
The final line contains quantifiers.


LDL is a typed language. In the first line of the block defining the set $ \langle type \rangle$, there is the function type constructor $\FunType{\typeClass_1}{\typeClass_2}$, which represents the type of functions which take inputs of type $\typeClass_1$ and produce outputs of type $\typeClass_2$. Next there are the standard Bool and Real types. Finally, there are: Vec $n$, the type of vectors of length $n$, and Index $n$, the type of indices into vectors of length $n$.
The fact that these two types are parametrised by the size of the Vector will allow the type-system to statically eliminate specifications with out-of-bounds errors. 


\begin{figure}[t]
	\small
	\centering
\begin{subfigure}[t]{0.45\textwidth}
	\begin{grammar}
		<expr> $\ni e$ ::=  
		$\identClass$ | $\netClass$ | \elReal $\: \in \Real$ | \elNat $\: \in \Nat$ | \elBool $\: \in \Bool$
		\alt \App{\exprClass}{\exprClass} | \Lam{\identClass}{\typeClass}{\exprClass} | \Let{\identClass}{\typeClass}{\exprClass}{\exprClass}
		\alt \AndSymbol{} | \OrSymbol{} | \NotSymbol{} | \ImpliesSymbol{} | \AddSymbol{} | \NegSymbol{} | \MulSymbol{}
		\alt \NeqSymbol{} | \LeqSymbol{} | \GeqSymbol{} | \LeSymbol{} | \GeSymbol{} | \EqSymbol{}
		\alt \Seq{\exprClass}{\exprClass} | \AtSymbol{}
		\alt \Forall{\identClass}{\typeClass}{\exprClass} | \Exists{\identClass}{\typeClass}{\exprClass} 
		
%
	\end{grammar}
\end{subfigure}
\hfill
\begin{subfigure}[t]{0.5\textwidth}
	\begin{grammar}
		<type> $\ni \tau$  ::= 
		\FunType{s}{\typeClass} 
		| s
		
		<simple type> $\ni s$ ::= 
		\alt \BoolType
		\alt \RealType
		\alt \VecType{n}   |  \FinType{n}  \hspace{3em} for $n \in \Nat$
	\end{grammar}
\end{subfigure}
\setlength{\belowcaptionskip}{-15pt} 
	\caption{Specification Language of LDL: expressions and types. For readability binary operators will often be written using infix notation, but this is syntactic sugar for the prefix form. 
	}
	\label{fig:syntax}
\end{figure}

To illustrate LDL in use, we now present one possible encoding of the robustness property from Definition~\ref{def:robustness} for a network $f$ of size $784$ ($28 \times 28$ pixel images). 
\begin{example}[Encoding of robustness property in LDL]
\label{ex:wte}
{\small{
\newcommand{\exsize}{784} 
\begin{equation*}
\begin{array}{l}
\robustexpr = 
\quad \letText\ (bounded : \VecType{\exsize} 
\rightarrow \VecType{\exsize} 
\rightarrow \RealType 
\rightarrow \BoolType)\ = \\
	
\qquad \lamText\ (\ve : \VecType{\exsize})\ .
\ \lamText\ (\ue : \VecType{\exsize})\ .
\ \lamText\ (a : \RealType)\ . \\
	 
\quad\qquad \forall\ (i : \FinType{\exsize} )\ .
\ \letText\ (d : \RealType) = \ve\ !\ i - \ue\ !\ i \ \textrm{in}

	  -a \leq d \ \land \ d \leq a\ \\
\quad \textrm{in} \\

\quad\quad \lamText\ (\epsilon: \RealType)\ .
\ \lamText\ (\delta : \RealType)\ .
\ \lamText\ (\hat{\x}: \VecType{\exsize})\ . \\

\quad\qquad \forall\ (\x: \VecType{\exsize})\ .\ ( bounded\ \x\ \hat{\x}\ \epsilon)\ \impl \ (bounded\ (f\ \x)\ (f\ \hat{\x})\ \delta)
	\end{array}
	\end{equation*}}}
\end{example}	
\noindent This example illustrates how LDL has already fulfilled several of our goals defined in Section~\ref{sec:introduction}:
\begin{inparaenum}[a)]
	\item 
	While other DLs work with different fragments of propositional and first-order logics, LDL covers the whole subset of FOL, and in particular quantifiers are first-class constructs in the language.
	\item LDL allows one to express properties of neural networks concisely and at a high level of abstraction. The presence of the Vec type makes explicit that the logic is intended to reason over vectors of reals. 
	\item LDL is strongly typed, which will allow us to define a type-system, and subsequently a rigorous notion of an interpretation function over well-typed terms. In our presentation, the types have to be written explicitly, but it would be simple to remove this constraint by using a standard type-inference algorithm to infer most of them.
\end{inparaenum}

\begin{figure}[t]
	\footnotesize
	\begin{gather*}
	\infer[(networkVar)]{\hcTypeRel{f}{\FunType{\VecType{m}}{\VecType{n}}}}
	{\Xi[f] = (m, n)}
	\semSpace
	\infer[(boundVar)]{\hcTypeRel{\id}{\tau}}{\Delta[\id] = \tau}
	\\
	\\
	\infer[(real)]{\hcTypeRel{\elReal}{\RealType}}{\elReal \in \Real}
	\semSpace
	\infer[(index)]{\hcTypeRel{\elNat}{\FinType{n}}}{\elNat \in \{0,...,n-1\}}
	\semSpace
	\infer[(bool)]{\hcTypeRel{\elBool}{\BoolType}}{\elBool \in \ttype{\BoolType}}
	\\
	\\
	\infer[(app)]{\hcTypeRel{\App{e_1}{e_2}}{\tau_2}}{\hcTypeRel{e_1}{\FunType{\tau_1}{\tau_2}} & \hcTypeRel{e_2}{\tau_1}}
	\\
	\\
	\infer[(lam)]{\hcTypeRel{\Lam{\id}{\tau_1}{e}}{\FunType{\tau_1}{\tau_2}}}{\hTypeRel{\hcNetCtx,\consNew{x\rightarrow \tau}{\hcVarCtx}}{e}{\tau_2}}
	\semSpace
	\infer[(let)]{\hcTypeRel{\Let{\id}{\tau_1}{e_1}{e_2}}{\tau_2}}{\hcTypeRel{e_1}{\tau_1}  & \hTypeRel{\hcNetCtx,\consNew{x\rightarrow \tau}{\hcVarCtx}}{e_2}{\tau_2}}
	\\
	\\
	\infer[(and) (or) (implies)]{\hcTypeRel{\wedge,\vee, \Rightarrow}{\FunType{\BoolType}{\FunType{\BoolType}{\BoolType}}}}{}
	\semSpace
	\infer[(not)]{\hcTypeRel{\text{\NotText{}}}{\FunType{\BoolType}{\BoolType}}}{}
	\\
	\\
	\infer[(add) (mul)]{\hcTypeRel{\text{\AddSymbol{}}, \text{\MulSymbol{}}}{\FunType{\RealType}{\FunType{\RealType}{\RealType}}}}{}
	\semSpace
	\infer[(\bowtie)]{\hcTypeRel{\bowtie}{\FunType{\RealType}{\FunType{\RealType}{\BoolType}}}}{}
	\\
	\\
	\infer[(vec)]{\hcTypeRel{\Seq{e_1}{e_n}}{\VecType{n}}}{\hcTypeRel{e_1}{\RealType} & \ldots & \hcTypeRel{e_n}{\RealType}}
	\semSpace
	\infer[(lookup)]{\hcTypeRel{\:\AtSymbol{}}{\FunType{\VecType{n}}{\FunType{\FinType{n}}{\RealType}}}}{}
	\\
	\\
	\infer[(forall)]{\hcTypeRel{\Forall{\id}{\tau}{e}}{\BoolType}}{\hTypeRel{\hcNetCtx,\consNew{x\rightarrow \tau}{\hcVarCtx}}{e}{\BoolType} & \tau \neq \FunType{\tau_1}{\tau_2}}
	\quad
	\infer[(exists)]{\hcTypeRel{\Exists{\id}{\tau}{e}}{\BoolType}}{\hTypeRel{\hcNetCtx,\consNew{x\rightarrow \tau}{\hcVarCtx}}{e}{\BoolType} & \tau \neq \FunType{\tau_1}{\tau_2}}
	\end{gather*}
	\setlength{\belowcaptionskip}{-15pt} 
	\caption{The typing relation $\haTypeRel$ for LDL expressions; $\bowtie$ stands for comparison operators ==, \NeqSymbol{}, \LeqSymbol{}, \GeqSymbol{}, \LeSymbol{},  \GeSymbol{}. }
	\label{fig:types}
\end{figure}

We now define a typing relation for well-typed expressions in LDL. A \emph{bound variable context},~$\Delta$, is a partial function that assigns each bound variable $x$ currently in scope a type $\tau$. We will use the notation $\Delta[x\rightarrow \tau]$ to represent updating $\Delta$ with a new mapping from $x$ to $\tau$. 
A \emph{network context}, $\Xi$, is a function that assigns each network variable $f$ a pair of natural numbers $(m,n)$ such that $ m $ is the number of inputs to the network and $ n $ is the number of outputs. We will use the notation $\Xi[f\rightarrow (m,n)]$ to represent updating $\Xi$ with a new mapping from $f$ to $(m,n)$.
 The set of \emph{well-typed expressions} is the collection of all $e$ for which there exists contexts $\Xi$ and $\Delta$ and type $\tau$ such that $\haTypeRel$, which is defined inductively in Figure~\ref{fig:types}, holds.

When $\Forall{\id}{\tau}{e}$ is well typed and $\tau = \RealType$ or $\tau = \VecType{n}$, we will say that the quantifier $\forall$ is \emph{infinite}, otherwise the quantifier is \emph{finite}; similarly for $\exists$. 
Note that for simplicity, we assume all bound syntactic variable names for quantifiers are unique. In practice this can be achieved by applying typical binding techniques such as de Brujin indices~\cite{debruijn}.

\section{Loss Function Semantics of LDL}
\label{sec:semantics}

We translate the LDL syntax into loss functions in a manner that is parametric on the choice of the concrete DL.
Firstly, in Section~\ref{subsec:types}, we interpret LDL types as sets of values. 
Secondly, in Section~\ref{subsec:expressions} we interpret expressions, splitting the definitions in two parts: expressions whose interpretation is independent of the choice of DL (e.g.~application, lambda, vectors) and expressions whose interpretation is dependent on the choice of DL (e.g.~logical connectives, comparison operators). In Section~\ref{subsec:quantifiers} we define semantics of quantifiers uniformly for all logics.   

\subsection{Semantics of LDL Types}
\label{subsec:types}

We first define a mapping $\ttype{\cdot}$ from LDL types to the set of values that expressions of that type will be mapped to: 
{\small{\begin{equation*}
\newcommand{\tspace}{\quad\:\:\:\:}
	\ttype{\RealType}{} = \Real{}
	\tspace
	\ttype{\VecType{n}}{} = \ttype{\RealType}^{n} 
	\tspace
	\ttype{\FinType{n}}{} = \{0, \ldots, n-1\}
	\tspace
	\ttype{\FunType{\tau_1}{\tau_2}} = {\ttype{\tau_2}{}}^{\ttype{\tau_1}{}}
\end{equation*}}}
\noindent Note that the type Bool is absent as it is dependent on the choice of DL, given in Figure~\ref{fig:expr}. 

\subsection{Semantics of LDL Expressions}
\label{subsec:expressions}

The next step is to interpret well-typed expressions, $\hcTypeRel{e}{\tau}$, which  
will be dependent on the \emph{semantic context} of the interpretation. An expression's semantic context is comprised of three parts: 

\begin{itemize}
	
	\item A semantic network context $\hNetCtx$ is a function that maps each network variable $f \in \Xi $, such that $\Xi[f] = (m,n)$, to a function $\mathtt{f} : \Real^m \rightarrow \Real^n$, the actual (external) implementation of the neural network. We refer to the set of such functions as $\ttype{\Xi}$.
	
	\item Let $q(e)$ be the set of infinitely quantified syntactic variables within expression $e$. A semantic quantifier context $\hRandCtx$ is a function that maps each  variable $x$ in $q(e)$ to a random variable $X$, from which values for $x$ are sampled from (by discussion of Section~\ref{sec:datasets-and-classifiers} this also extends to cover random vector variables). We refer to the set of such functions as~$\ttype{q(e)}$.
	
	\item A semantic bound context $\hVarCtx$ is a partial function that assigns each bound variable $x \in \Delta$ a semantic value in~$\ttype{\Delta(x)}$. We refer to the set of such functions as $\ttype{\Delta}$. 

\end{itemize}
\begin{example}[Semantic context]
Consider the LDL expression in Example~\ref{ex:wte}: $\robustexpr$ contains a single network variable~$f$, and therefore the network context will be of the form: $\hNetCtx = [ f \mapsto (\mathtt{f}: \Real^{784} \rightarrow \Real^{10}) ]$ where $\mathtt{f}$ is the actual neural network implementation (e.g.~the Tensorflow/PyTorch object). The expression contains a single infinitely quantified variable $\x$, and therefore the quantifier context $\hRandCtx$ will be of the form $[\x \mapsto \X]$, where $\X$ is random vector  variable of size $784$, and represents the underlying distribution that the input images are being drawn from. At the top-level of the expression, no bound variables in scope and therefore $\hVarCtx$ is empty. However, when interpreting the subexpression $bounded\ \x\ \hat{\x}\ \epsilon$,the bound context $\hVarCtx$ will be of the form:
$[bounded \mapsto v_1, \: \epsilon \mapsto v_2,\:  
 \: \hat{\x} \mapsto v_3, \: \x \mapsto v_4]$
where $v_1$ is the interpretation of the let bound expression \emph{bounded}, $v_2 \in \Real$ is the concrete value of $\epsilon$, and $v_3, v_4 \in \Real^{784}$ are the value passed in for $\hat{\x}$ and the current value of the quantified variable $\x$ respectively. 
\end{example}

In general the interpretation of an expression will also be dependent on some differentiable logic $\logic$. In this paper, $\logic$ will stand for either of: $DL2$~\cite{fischer2019dl}, $STL$~\cite{varnai}, or Fuzzy DLs ($G$, $L$, $Y$, $p$)~\cite{VANKRIEKEN2022103602}. 
In this section, we present the formalisation for the first three ($DL2$, $STL$, $G$) while the remaining fuzzy logic variants ($L$, $Y$, $p$) can be found in Appendix \ref{appendix:additional-dls}.

Therefore, in general we will use the notation $\texpr{e}{\hNetCtx}{\hRandCtx}{\hVarCtx}{\logic}$ to represent the mapping of LDL expression $e$ to the corresponding value in the loss function semantics using logic $\logic$ in the semantic context $(\hNetCtx, \hRandCtx, \hVarCtx)$. Figure \ref{fig:generic-semantics} shows the definition of $\texpr{\cdot}{\hNetCtx}{\hRandCtx}{\hVarCtx}{\logic}$ for the LDL expressions whose semantics are independent of the choice of $\logic$. 
Note that, with the exception of network variables, the semantics is standard and could belong to any functional language.

Figure~\ref{fig:expr} interprets the expressions that depend on the choice of DL, as follows. 

\begin{figure}[t]
	\centering
	\footnotesize
	\begin{gather*}
	\texpr{\id}{\hNetCtx}{\hRandCtx}{\hVarCtx}{\logic} = \hVarCtx[\id]
	\semSpace
	\texpr{f}{\hNetCtx}{\hRandCtx}{\hVarCtx}{\logic} = \hNetCtx[f]
	\semSpace
	\texpr{\elReal}{\hNetCtx}{\hRandCtx}{\hVarCtx}{\logic} = \elReal
	\semSpace 
	\texpr{\elNat}{\hNetCtx}{\hRandCtx}{\hVarCtx}{\logic} = \elNat
	\\
	\\
	\texpr{\Lam{\id}{\tau}{e}}{\hNetCtx}{\hRandCtx}{\hVarCtx}{\logic} = \lambda (y : \ttype{\tau}) . \texpr{e}{\hNetCtx}{\hRandCtx}{\hVarCtx[x\rightarrow y]}{\logic}
	\semSpace
	\texpr{\App{e_1}{e_2}}{\hNetCtx}{\hRandCtx}{\hVarCtx}{\logic} = \texpr{e_1}{\hNetCtx}{\hRandCtx}{\hVarCtx}{\logic}(\texpr{e_2}{\hNetCtx}{\hRandCtx}{\hVarCtx}{\logic}) 
	\\
	\\
	\texpr{\Let{\id}{\tau}{e_1}{e_2}}{\hNetCtx}{\hRandCtx}{\hVarCtx}{\logic} = \texpr{e_2}{\hNetCtx}{\hRandCtx}{\hVarCtx[x \rightarrow \texpr{e_1}{\hNetCtx}{\hRandCtx}{\hVarCtx}{\logic}]}{\logic}
	\qquad\:
	\texpr{[e_1, ..., e_n]}{\hNetCtx}{\hRandCtx}{\hVarCtx}{\logic} = <{ \texpr{e_1}{\hNetCtx}{\hRandCtx}{\hVarCtx}{\logic}, \ldots, \texpr{e_n}{\hNetCtx}{\hRandCtx}{\hVarCtx}{\logic}}>
	\\
	\\
	\texpr{!}{\hNetCtx}{\hRandCtx}{\hVarCtx}{\logic} = \lam{(\val_1: \ttype{\VecType{n}}), (\val_2:\ttype{\FinType{n}})}{{\val_1}_{\val_2}}
    \\
    \\
	\texpr{+}{\hNetCtx}{\hRandCtx}{\hVarCtx}{\logic} = \lam{(\val_1, \val_2: \ttype{\RealType})}{\val_1 + \val_2} \semSpace
	\texpr{\times}{\hNetCtx}{\hRandCtx}{\hVarCtx}{\logic} = \lam{(\val_1, \val_2: \ttype{\RealType})}{\val_1 \times \val_2}
	\end{gather*}
	\caption{Semantics of LDL expressions that are independent of the choice of DL.}
	\label{fig:generic-semantics}
	\vspace*{-1.5em}
\end{figure}

\begin{figure}[t]
	\footnotesize
	\renewcommand{\arraystretch}{1.3}
	\begin{center}
	\begin{tabular}{|c|c|c|c|}
	\hline 
	Syntax & 
	DL2 interpretation & 
	Gödel interpretation &
	STL interpretation \\ 
	\hline 
	$\tempty{\BoolType}_{L}$ & 
	$[-\infty, 0]$ &
	$ [0, 1] $ & 
	$ [-\infty, \infty] $
	\\ \hline
	$\tempty{\top}_{L}$ & 
	0 & 
	1 &
	\coloured{$\infty$}
	\\ \hline
	$\tempty{\bot}_{L}$ & 
	\coloured{$- \infty$} &
	$0$ &
	\coloured{$- \infty$}
	\\ \hline
	$\tempty{\neg}_{L}$ & 
	- &
	$\lam{\val}1-\val$ &
	$\lam{\val}- \val$
	\\ \hline 
	$\tempty{\wedge}_{L}$ & 
	$ \lam{\val_1,\val_2}{\val_1 + \val_2} $ &
	$ \lam{\val_1,\val_2} \min (\val_1, \val_2)$ & $and_{S}$
	\\ \hline 
	$\tempty{\vee}_{L}$& $ \lam{\val_1,\val_2}{-\val_1 \times \val_2} $ &
	$ \lam{\val_1,\val_2} \max (\val_1, \val_2)$ & 
	$or_{S}$
	\\ \hline 
	$\tempty{\impl}_{L}$ & 
	- & 
	$ \lam{\val_1,\val_2} \max (1 - \val_1, \val_2)$ &
	-
	\\ \hline
	$ \tempty{==}_{L} $ &
	$ \lam{\val_1, \val_2}{- | \val_1 - \val_2 | } $ &
	\coloured{$\lambda \val_1, \val_2. 1 -  \tanh |\val_1-\val_2|$} &
	\coloured{$\lam{\val_1, \val_2}{-|\val_1 - \val_2|}$}
	\\ \hline
	{$ \tempty{\leq}_{L} $} &
	$\lam{\val_1, \val_2}{- \max(\val_1 - \val_2, 0)}$ &
	\coloured{$\lam{\val_1, \val_2}{1-\max(\tanh |\val_1-\val_2|, 0)}$} &
	\coloured{$\lam{\val_1, \val_2}{\val_2 - \val_1}$}
	\\ \hline
	\end{tabular}

\vspace{1em}

\begin{tabular}{lcc}
		$
		and_S  = \lam{\val_1,\ldots,\val_M}
		\begin{cases}
		\dfrac{\sum_i \val_{\min} e^{\tilde{\val_i}} e^{\nu \tilde{\val_i}}}{\sum_i e^{\nu \tilde{\val_i}}} & \text{if}\ \val_{\min} < 0 \\
		\dfrac{\sum_i \val_i e^{-\nu \tilde{\val_i}}}{\sum_i e^{-\nu \tilde{\val_i}}} & \text{if}\ \val_{\min} > 0 \\
		0 & \text{if}\ \val_{\min} = 0 \\
		\end{cases}
		$ 
		where
		\begin{tabular}{c}
		$\nu \in \Real^+$ (constant) \\
		$
		\val_{\min} = \min (\val_1, \ldots, \val_M)
		$\\
		$
		\tilde{\val_i} = \dfrac{\val_i - \val_{\min}}{\val_{\min}} 
		$
		
	\end{tabular} \\\\
		$or_S = $ analogous to $and_S$ & &
	\end{tabular}
	\end{center}
	\caption{Semantics of LDL expressions dependent on the choice of DL. \coloured{Colour} denotes parts of semantics added for LDL and not defined originally. In the implication row ``-'' denotes that interpretation of implication was not provided separately and is defined with negation and disjunction. 
	While $\top$ is part of the syntax in the STL interpretation the semantic interpretation was not provided. We take $\infty$ to stand for $lim_{n \to \infty} n$. }
	\label{fig:expr}
	\vspace{-1em}
\end{figure}

\textbf{Booleans.}
When giving the semantics of LDL types in Section~\ref{subsec:types}, we intentionally omitted the interpretation of $\BoolType{}$, as it is dependent on the DL. In DL2, $\BoolType{}$ was originally mapped to $[0, \infty]$, where $0$ is interpreted as true, and other values were corresponding to the degree of falsity. Thus, $\top$ is interpreted as $0$, but $\bot$ did not exist in DL2. In Fuzzy DLs with the domain $[0,1]$, 
$1$ stood for absolute truth, and other values -- for degrees of (partial) truth; finally, in STL interval $[-\infty, \infty]$, all values but $0$ stood for degrees of falsity and truth. 
Note that we swap the domain of DL2 for $[\infty, 0]$ in order to fit with the ordering of truth values in other logics. 
We will now see how these choices determine interpretation for predicates and connectives.






\textbf{Predicates.}
The predicates are given by comparisons in LDL. Here we interpret just $\leq$ and $==$, the remaining comparisons are given in Appendix~\ref{app:comparisons}. 
%
Originally, only DL2~\cite{fischer2019dl} had these comparisons. 
Inequality of two terms $a_1$ and $a_2$ can be interpreted there by a measure of how different they are. We slightly modify the DL2 translation in order to adapt it in a similar way to other DLs,
%
mapping the difference between $\val_1$ and $\val_2$ inside of the chosen domain. 

\textbf{Logical Connectives.}
The interpretation of $\wedge$ and $\vee$ follows the definitions given in the literature.
As Figure~\ref{fig:expr} makes it clear, the interpretation of negation for DL2 is not defined. By design, DL2 negation is pushed inwards to the level of comparisons between terms, see Appendix~\ref{ap:neg}.  STL translation did not have a defined interpretation for implication.



\subsection{Semantics of LDL Quantifiers}
\label{subsec:quantifiers}

So far, we have given an interpretation for quantifier-free  formulae. However, none of works studied so far have included infinite quantifiers 
as first class constructs in the DL syntax. In this section, we therefore propose novel semantics for $\texpr{\forall x: \tau .\ e}{\hNetCtx}{\hRandCtx}{\hVarCtx}{\logic}$ and $\texpr{\exists x: \tau .\ e}{\hNetCtx}{\hRandCtx}{\hVarCtx}{\logic}$.

Finite quantifiers were introduced in~\cite{KriekenAH20} via finitely composed conjunction and disjunction. We extend this idea to other DLs. Given an expression $\forall x: \tau .\ e$ with a finite quantifier over the variable $x$,
 and given the interpretation $\ttype{\tau}{} = \{d_1, \ldots, d_n\}$,  
%
$\texpr{\forall x: \tau .\ e }{\hNetCtx}{\hRandCtx}{\hVarCtx}{L} = \texpr{\ e [x/d_1] \land \ldots \land  e [x/ d_n]}{\hNetCtx}{\hRandCtx}{\hVarCtx}{L}$; analogously for $\exists$ and $\lor$.
Note that we have only two finite types, $ \FinType{n} $ or $\BoolType$, and for the latter we take $\ttype{\BoolType}{} = \{\texpr{\top}{\hNetCtx}{\hRandCtx}{\hVarCtx}{L}, \texpr{\bot}{\hNetCtx}{\hRandCtx}{\hVarCtx}{L}\}$ to interpret the quantifiers. 

To proceed with infinite quantifiers, recall that context $Q$ gives us a probability distribution for every first-order variable.
This gives us a way to use 
the definition of an expectation for a function from Section~\ref{sec:datasets-and-classifiers} to interpret quantifiers. Recall that, for $g: \Real^n \rightarrow \Real $ we had:
$$\mathbb{E}[g(\X)] = \int_{- \infty}^{\infty} \distr_{\X}(\x)  g(\x) d \x.$$

\citet{fischer2019dl} were the first to interpret universally quantified formulae via expectation maximisation methods, in which case the optimised parameters were neural networks weights, and the quantified properties concerned robustness of the given neural network. 
Our goal is to propose a unifying approach to both universal and existential quantification that will fit with all DLs that we study here,
and will not make any restricting assumptions about the universal properties that the language can express.
For example, we defined LDL to admit expressions that may or may not refer to neural networks (or weights), and may express properties more general than robustness.  


 With this in mind, we introduce the following notation. 
 For a function $g: \Real^n \rightarrow \Real $, we say that $\x_{\min}$ is the \emph{global minimum} (resp. \emph{global maximum}) if 
$g(\x_{\min}) \leq g(\y)$ (resp. $g(\x_{\max}) \geq g(\y)$) for any $\y$ on which $g$ is defined.
We define a $\gamma$-ball around a point $\x$ as follows:
$\mathbb{B}_{\x}^{\gamma} = \{\y \ \mid \  || \x - \y || \leq \gamma  \}$.
We call the expectation  
$$\mathbb{E}_{\min}[g(\X)] = \lim_{\gamma \rightarrow 0} \int_{\x \in \mathbb{B}_{\x_{\min}}^{\gamma}} \distr_{\X}(\x)  g(\x) d \x$$
\emph{minimised expected value for $g$ (over the random variable $\X$)}. Taking $\x \in \mathbb{B}_{\x_{\max}}^{\gamma}$ in the above formula will give the definition of $\mathbb{E}_{\max}[g(\X)]$, the \emph{maximised expected value for $g$ (over the random variable $\X$)}.  

The key insight when applying this is that given a logic $\logic$ and the context $(\semCtx)$, the loss for the body $e$ of the quantified expressions $\forall x: \tau .\ e$ or $\exists x: \tau .\ e$ can be calculated for any particular given semantic value for the quantified variable $x$. Therefore we can construct a function $\lambda y.\texpr{e}{\hNetCtx}{\hRandCtx} {\hVarCtx[x\rightarrow y]}{\logic}$ of type  
$\ttype{\tau} \rightarrow \Real$ that takes in the value and interprets the body of the quantifier with respect to it. 
This gives us interpretation of universal and existential quantifiers as minimised (or maximised) expected values for the interpretation of their body:
\begin{align*}
\texpr{\forall x: \tau .\ e }{\hNetCtx}{\hRandCtx}{\hVarCtx}{L} = \mathbb{E}_{\min}[(\lambda y.\texpr{e}{\hNetCtx}{\hRandCtx}{\hVarCtx[x\rightarrow y]}{L})(Q[x])]
\\
\texpr{\exists x: \tau .\ e }{\hNetCtx}{\hRandCtx}{\hVarCtx}{L} = \mathbb{E}_{\max}[(\lambda y.\texpr{e}{\hNetCtx}{\hRandCtx}{\hVarCtx[x\rightarrow y]}{L})(Q[x])] 
\end{align*}
The formulae above refer to the minimised (resp. maximised) expectations for the function $\lambda y.\texpr{e}{\hNetCtx}{\hRandCtx}{\hVarCtx[x\rightarrow y]}{L}$ over the random variable $Q[x]$. Note that this is the first time we use the semantic quantifier context $Q$ to map a syntactic variable $x$ to a random variable. 
Another notable feature of the resulting interpretation is that the interpretation for quantifiers is parametric on the choice of the logic $\logic$, and that it has the advantage of being compositional which opens a new degree of generality:  it allows for
 arbitrary nesting of quantifiers (in well-formed expressions).  This will greatly simplify the proof of soundness for LDL.

We can now see that DL2's interpretation of quantifiers can be expressed as a special case of this definition. 
In particular, for a quantified formula $\forall x. P(x)$, where $P$ is a robustness property,  ~\citet{fischer2019dl} take $||\x - \hat{\x}|| \leq \epsilon$  and empirically compute the ``worst perturbation" in the $\epsilon$-interval from $\hat{x}$ using a ``PGD adversarial attack" projected within that interval. 
This worst perturbation is the global minimum within the chosen $\epsilon$-ball. And the loss function that optimises the neural network weights minimises the expectation.
This key example motivates our use of expectation terminology in the interpretation of quantifiers. 
An alternative would be to use just the global minimum (or maximum) of a function directly when defining $\texpr{\forall x: \tau .\ e }{\hNetCtx}{\hRandCtx}{\hVarCtx}{L}$ and $\texpr{\exists x: \tau .\ e }{\hNetCtx}{\hRandCtx}{\hVarCtx}{L}$.
But such a choice will not generalise over the DL2 implementation (or any other optimisation method). Another attempt to model quantification over real domains was made by \citet{BadreddineGSS22} within the framework of \emph{``Logic Tensor Networks"}. There, all variables were mapped to finite sequences of real numbers, by-passing explicit use of the notions of random variables and probability distributions over random variables. Intuitively, each given data set has only a finite number of objects, thus giving a finite domain to map to. This solution would not be satisfactory for a DL, that must take into consideration the fact that loss functions are ultimately designed to compute approximations of the unknown probability distribution, from which the given data is sampled (cf. also the discussion given in Introduction).

\begin{example}[Semantics of Quantified Expressions]
	We calculate a loss function for the robustness property in Example~\ref{ex:wte} with respect to DL2 logic in the context $(\semCtx)$ where $\hNetCtx = [ f \mapsto (\mathtt{f}: \Real^{784} \rightarrow \Real^{10}) ]$, $\hRandCtx = [x \mapsto \X]$ and $\hVarCtx$ is empty. Since DL2 has no separate definition of implication we translate the implication in robustness property in a standard manner using negation and disjunction and pushing the negation inwards  to the level of comparisons: 
	\begin{align*}
	\texpr{e^*}{\hNetCtx}{\hRandCtx}{\hVarCtx}{DL2} = \lambda\epsilon.\lambda\delta.\lambda\hat{\x}.\ \mathbb{E}_{\min}[(\lambda \x. -(\tempty{\text{bounded}}\ \x\ \hat{\x}\ \epsilon) \times (\tempty{\text{bounded}}\  f(\x)\ f(\hat{\x})\ \delta))(\X)] \\
\quad \text{where} \ \tempty{\text{bounded}} = \lambda \x . \lambda \y . \lambda v . \sum_{i=0}^{783} -\max (-v-(\x_i - \y_i),0) - \max(v-(\x_i - \y_i),0)
	\end{align*}

\end{example}

The final problem to consider is whether the interpretation of infinite quantifiers via global minima and maxima commutes with the interpretation for $\land$ and $\lor$: in general it does not. Only the G\"{o}del interpretation for these connectives (based on $min$ and $max$ operators) commutes with quantifiers interpreted as minimised or maximised expected values. Section~\ref{sec:properties} considers this problem in detail.

\section{Using LDL to explore properties of DLs}
\label{sec:soundness}


\subsection{Soundness and Completeness of DLs}

In all existing work, soundness and completeness of DLs was proved for propositional fragments of those logics. In the proofs the standard Boolean semantics of propositional classical logic is taken as a decidable procedure for membership of the set of all true formulae $\texpr{\top}{\hNetCtx}{\hRandCtx}{\hVarCtx}{L}$. The main soundness result establishes that if $\texpr{e}{\hNetCtx}{\hRandCtx}{\hVarCtx}{L} \in \texpr{\top}{\hNetCtx}{\hRandCtx}{\hVarCtx}{L}$ then $e$ is \emph{true} in the propositional logic.
Completeness shows that the implication holds in the other direction as well. 
%

However, as LDL is equipped with quantifiers such a procedure is no longer decidable, we may no longer rely on this method.  
For a typed FOL, one could define a Kripke-style semantics~\cite{sorensen2006lectures,LIPTON20181} for characterising the set of all true formulae, or take provable FOL formulae to characterise the set of true FOL formulae. In this section we take the latter approach. 
%
%
Specifically, we take the set of FOL formulae provable in logic \LJ\ by Gentzen~\cite{Gentzen69}. We will say that a DL is \emph{sound}, if the set of formulae that it interprets as $\texpr{\top}{\hNetCtx}{\hRandCtx}{\hVarCtx}{L}$ is a subset of formulae provable in \LJ.

\subsubsection{Type Soundness of Interpretation}

We start by proving soundness of the typing relation in LDL.
The following result, proven by induction on the typing judgement, will be useful in proving soundness of LDL:

\begin{theorem}[Type Soundness of LDL]\label{th:type-sound}
For all differentiable logics $L$ and well typed expressions $\hcTypeRel{e}{\tau}$, then for all $\hNetCtx \in \ttype{\Xi}$, $\hRandCtx \in \ttype{q(e)}$ and $\hVarCtx \in \ttype{\Delta}$
 we have $\tempty{e}^{\hNetCtx,\hRandCtx,\hVarCtx}_L \in \ttype{\tau}$.
\end{theorem}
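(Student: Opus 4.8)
The plan is to prove the statement by structural induction on the derivation of the typing judgement $\haTypeRel$ (equivalently, on the structure of $e$), with the three semantic contexts $\hNetCtx$, $\hRandCtx$, $\hVarCtx$ and the logic $L$ all universally quantified inside the induction hypothesis. Making the contexts arbitrary rather than fixed is essential: the binder rules $(lam)$, $(let)$, $(forall)$ and $(exists)$ all extend the bound context when typing their subexpression, so the hypothesis must already range over all compatible contexts for the inductive step to close. Concretely, I would first record a small \emph{context-extension} observation: if $\hVarCtx \in \ttype{\Delta}$ and $v \in \ttype{\tau_1}$, then $\hVarCtx[x \to v] \in \ttype{\Delta[x \to \tau_1]}$, and analogously the quantifier context attached to a subexpression is a restriction of the one for the whole expression. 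This is precisely what licenses applying the hypothesis to the bodies of binders.

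With this in place, the logic-independent cases from Figure~\ref{fig:generic-semantics} are routine. For the variable rules the result is immediate from the definitions of $\ttype{\Xi}$ and $\ttype{\Delta}$: $\texpr{x}{\hNetCtx}{\hRandCtx}{\hVarCtx}{L} = \hVarCtx[x] \in \ttype{\Delta(x)}$, and $\texpr{f}{\hNetCtx}{\hRandCtx}{\hVarCtx}{L} = \hNetCtx[f]$, where $\hNetCtx[f] : \Real^m \to \Real^n$ inhabits $\ttype{\FunType{\VecType{m}}{\VecType{n}}}$. Constants are handled directly by their side conditions in Figure~\ref{fig:types}. For $(app)$ the hypothesis gives $\texpr{e_1}{\hNetCtx}{\hRandCtx}{\hVarCtx}{L} \in \ttype{\tau_2}^{\ttype{\tau_1}}$ and $\texpr{e_2}{\hNetCtx}{\hRandCtx}{\hVarCtx}{L} \in \ttype{\tau_1}$, so the application lands in $\ttype{\tau_2}$; for $(lam)$ and $(let)$ one combines the hypothesis with the context-extension observation to obtain a function in $\ttype{\tau_2}^{\ttype{\tau_1}}$, respectively a value in $\ttype{\tau_2}$. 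The $(vec)$ case assembles $n$ reals into a tuple in $\ttype{\VecType{n}} = \Real^n$, and in $(lookup)$ the type $\FinType{n}$ of the index guarantees the index lies in $\{0,\dots,n-1\}$, so the element access is a well-defined real --- this is exactly where the size-indexed types eliminate out-of-bounds access.

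The only genuinely per-logic obligations arise for the expressions of Figure~\ref{fig:expr}: the connectives, the comparison predicates and the Boolean constants, where $\ttype{\BoolType}$ is read as the DL-specific domain $\tempty{\BoolType}_L$. Here the induction itself is trivial, and what must be verified is a \emph{closure} property: each operation maps arguments drawn from the correct domains back into its declared codomain. These are finite, elementary checks carried out once per logic --- for example, for G\"odel that $\min,\max,\ 1-(\cdot) \in [0,1]$ and $1 - \tanh|\cdot| \in [0,1]$; for DL2 that $+$, $-|\cdot|$ and $-\max(\cdot,0)$ stay within $[-\infty,0]$; and for STL that $and_S$, $or_S$ and the comparison interpretations remain in $[-\infty,\infty]$. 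The remaining fuzzy connectives are dispatched identically using the definitions in Appendix~\ref{appendix:additional-dls}.

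The delicate case, and the one I expect to be the crux, is the quantifier rule, since its semantics is the novel contribution. For a \emph{finite} quantifier the interpretation unfolds to a finite iteration of $\wedge$ (resp.\ $\vee$) over $\ttype{\tau}$, so membership in $\tempty{\BoolType}_L$ follows from the connective closure property just established. For an \emph{infinite} quantifier the hypothesis together with context extension yields a function $g = \lambda y.\,\texpr{e}{\hNetCtx}{\hRandCtx}{\hVarCtx[x \to y]}{L}$ whose codomain is contained in $\ttype{\BoolType} = \tempty{\BoolType}_L$, and the quantifier is interpreted as $\mathbb{E}_{\min}[g(Q[x])]$ (resp.\ $\mathbb{E}_{\max}$). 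The remaining obligation is therefore that the minimised and maximised expected values of a function valued in the closed set $\tempty{\BoolType}_L$ again lie in $\tempty{\BoolType}_L$. This is where the argument needs the most care: one must use that each $\tempty{\BoolType}_L$ is a closed interval of the extended reals, so that it is stable both under probability-weighted averaging inside the integral and under the limit $\gamma \to 0$, paying attention to the $\pm\infty$ endpoints in the DL2 and STL domains. Once this closure lemma for $\mathbb{E}_{\min}$ and $\mathbb{E}_{\max}$ is secured, the quantifier case closes and the induction is complete.
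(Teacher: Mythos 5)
Your proposal follows the same overall strategy as the paper's own proof (Appendix~\ref{ap:type-soundness}): structural induction on the typing judgement, with the variable and constant cases read off directly from the definitions of $\ttype{\Xi}$ and $\ttype{\Delta}$, and the application, lambda, let and vector cases closed by the induction hypothesis together with the context-extension observation. Where you differ is in coverage, and the difference is in your favour. The paper's appendix proof handles only the cases of Figure~\ref{fig:generic-semantics} plus the operator constants, dismissing the DL-dependent operations with ``follows by definition'' and noting merely that $\ttype{\BoolType}$ ``is always a subset of $\Real$''; crucially, it contains \emph{no case for the quantifier rules at all}, even though quantified expressions are well-typed expressions of type $\BoolType$ and their semantics is the paper's main novelty. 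You correctly identify these as the real content of the theorem: the connective and comparison cases require the per-logic closure checks on $\tempty{\BoolType}_L$ that you enumerate (e.g.\ that $\min$, $\max$, $1-(\cdot)$ preserve $[0,1]$, that $and_S$ preserves $[-\infty,\infty]$), the finite-quantifier case reduces to that same closure property via the unfolding into iterated $\wedge$/$\vee$, and the infinite-quantifier case needs the additional lemma that $\mathbb{E}_{\min}$ and $\mathbb{E}_{\max}$ of a $\tempty{\BoolType}_L$-valued function land back in $\tempty{\BoolType}_L$, which you justify by the fact that each $\tempty{\BoolType}_L$ is a closed interval and hence stable under probability-weighted averaging and under the limit $\gamma \to 0$. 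So yours is not a different route but a more complete execution of the paper's route; it exposes that the published proof omits exactly the cases (quantifiers and the DL-specific operations) that distinguish this theorem from a routine type-soundness result, and any formalisation of Theorem~\ref{th:type-sound} would need to discharge the obligations you spell out.
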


\begin{proof}
See Appendix~\ref{ap:type-soundness}.
\end{proof}

We will use the fact that all expressions of type $\RealType$ can be interpreted as real numbers in Section~\ref{subsec:soundness}. 

\begin{lemma}[Arithmetic evaluation of real expressions]\label{lem:expr-real}
	If $\validForm{e: \RealType}$ then $\tempty{e}^{\hNetCtx,\hRandCtx,\hVarCtx}_L \in \Real$.
\end{lemma}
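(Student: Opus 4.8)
The plan is to derive this lemma as an immediate corollary of Theorem~\ref{th:type-sound} (Type Soundness of LDL). That theorem already establishes, for every differentiable logic $L$ and every well-typed expression of type $\tau$, that the interpretation $\tempty{e}^{\hNetCtx,\hRandCtx,\hVarCtx}_L$ lands inside the value set $\ttype{\tau}$, provided the semantic contexts are themselves well-formed (that is, $\hNetCtx \in \ttype{\Xi}$, $\hRandCtx \in \ttype{q(e)}$, and $\hVarCtx \in \ttype{\Delta}$). The only work remaining is to specialise this to $\tau = \RealType$ and to unfold the definition of the type interpretation.

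Concretely, I would first instantiate Theorem~\ref{th:type-sound} with $\tau = \RealType$: assuming $e$ is well typed of type $\RealType$, we obtain $\tempty{e}^{\hNetCtx,\hRandCtx,\hVarCtx}_L \in \ttype{\RealType}$ for every choice of well-formed contexts. Second, I would appeal to the semantics of LDL types from Section~\ref{subsec:types}, where the interpretation of the real type is fixed to be $\ttype{\RealType} = \Real$. Substituting this equality into the membership statement yields $\tempty{e}^{\hNetCtx,\hRandCtx,\hVarCtx}_L \in \Real$, which is exactly the claim.

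There is essentially no hard step here: the lemma carries no content beyond observing that the real type denotes the reals and that type soundness has already been established in full generality (by induction on the typing derivation, per Appendix~\ref{ap:type-soundness}). The only point requiring care is to confirm that the side-conditions on the contexts assumed in Theorem~\ref{th:type-sound} coincide with those implicitly assumed in the statement of this lemma, so that the instantiation is legitimate; once that is checked, the result follows at once. The reason for isolating this statement as a separate lemma is purely for convenient reuse in Section~\ref{subsec:soundness}, where interpretations of arithmetic subexpressions must be treated as genuine real numbers rather than as elements of an abstract value domain.
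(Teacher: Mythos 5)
Your proposal is correct and matches the paper's own proof exactly: the paper likewise obtains this lemma as a corollary of Theorem~\ref{th:type-sound} by instantiating $\tau$ with $\RealType$, and your additional step of unfolding $\ttype{\RealType} = \Real$ is the same (implicit) observation. No gaps.
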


\begin{proof}
Obtained as a corollary of Theorem~\ref{th:type-sound} by instantiating $\tau$ with $\RealType$.
\end{proof}


\subsubsection{Sequent Calculus \LJ}

Figure~\ref{fig:terms} gives a formal definition of well-formed FOL formulae in LDL. This definition is a basis for defining sequents in \LJ.
\begin{figure}[!]
	\footnotesize
	\begin{spreadlines}{7pt}
		\begin{empheq}{gather*}
		\def\defaultHypSeparation{\hskip .05in}
		\AxiomC{$\Xi,\Delta \Vdash e : \tau_1 \to \BoolType$}
		\AxiomC{$\Xi,\Delta \Vdash e_1:\tau_1$}
		\BinaryInfC{$\validForm{e_1 \> e_2 }$}
		\bottomAlignProof
		\DisplayProof
		\quad
		\AxiomC{$\validForm{e_1}$}
		\AxiomC{$\validForm{e_2}$}
		\AxiomC{$\Box \in \set{\conj, \disj, \impl}$}
		\TrinaryInfC{$\validForm{e_1 \mathbin{\Box} e_2}$}
		\bottomAlignProof
		\DisplayProof
		\\[7pt]
		\AxiomC{$\validForm[\Xi,\Delta]{e}$}
		\UnaryInfC{$\validForm{\neg e }$}
		\bottomAlignProof
		\DisplayProof
		\quad
		\AxiomC{$\validForm[\Xi,\Delta, x:\tau]{e}$}
		\UnaryInfC{$\validForm{\all{x:\tau} e}$}
		\bottomAlignProof
		\DisplayProof
		\quad
		\AxiomC{$\validForm[\Xi,\Delta, x : \tau]{e}$}
		\UnaryInfC{$\validForm{\exist{x : \tau} e}$}
		\bottomAlignProof
		\DisplayProof
		\end{empheq}
	\end{spreadlines}
	\vspace*{-1em}
	\caption{\footnotesize{(Well-formed) FOL Formulae in LDL.}}
	\label{fig:terms}
\end{figure}

\begin{figure}[bt]
	\footnotesize{
		\begin{spreadlines}{7pt}
			\begin{empheq}{gather*}
			\def\ScoreOverhang{1pt}
			\def\defaultHypSeparation{\hskip .15in}
			\def\labelSpacing{2pt}
			\def\ScoreOverhang{1pt}
			\def\labelSpacing{2pt}
			\AxiomC{$\SequentCLJ{ \tempty{e_1}^{\hNetCtx,\hRandCtx,\hVarCtx}_L \bowtie^* \tempty{e_2}^{\hNetCtx,\hRandCtx,\hVarCtx}_L $}}
			\RightLabel{\LJArithR}
			\UnaryInfC{$\SequentCLJ{e_1 \bowtie e_2}$}
			\bottomAlignProof
			\DisplayProof
			\quad
			\AxiomC{$\SequentCLJ(\Gamma_T, \tempty{e_1}^{\hNetCtx,\hRandCtx,\hVarCtx}_L \bowtie^* \tempty{e_2}^{\hNetCtx,\hRandCtx,\hVarCtx}_L ){e}$}
			\RightLabel{\LJArithL}
			\UnaryInfC{$\SequentCLJ(\Gamma_T,e_1 \bowtie e_2){e}$}
			\bottomAlignProof
			\DisplayProof
			\\
			\AxiomC{$e' \in \ThAssumsEnv$}
			\AxiomC{$e \conv e'$}
			\RightLabel{\LJAxiom}
			\BinaryInfC{$\SequentCLJ{e}$}
			\bottomAlignProof
			\DisplayProof
			\quad
			\AxiomC{}
			\RightLabel{($\bot$)}
			\UnaryInfC{$\SequentCLJ(\ThAssumsEnv, \bot){e}$}
			\bottomAlignProof
			\DisplayProof
		        \quad
			\AxiomC{}
			\RightLabel{($\top$)}
			\UnaryInfC{$\SequentCLJ{\top}$}
			\bottomAlignProof
			\DisplayProof
			\quad 
			\AxiomC{$\SequentCLJ(\ThAssumsEnv, e_1){e}$}
			\RightLabel{\LJNegR}
			\UnaryInfC{$\SequentCLJ(\ThAssumsEnv){\neg e_1, e}$}
			\bottomAlignProof
			\DisplayProof
			\quad
			\AxiomC{$\SequentCLJ(\ThAssumsEnv){e_1, e}$}
			\RightLabel{\LJNegL}
			\UnaryInfC{$\SequentCLJ(\ThAssumsEnv, \neg e_1){e}$}
			\bottomAlignProof
			\DisplayProof
			\\
			\AxiomC{$\SequentCLJ{e_1}$}
			\AxiomC{$\SequentCLJ{e_2}$}
			\RightLabel{\LJConjR}
			\BinaryInfC{$\SequentCLJ{e_1 \conj e_2}$}
			\bottomAlignProof
			\DisplayProof
		        \quad
			\AxiomC{$\SequentCLJ(\ThAssumsEnv, e_i){e}$}
			\AxiomC{$i \in \{1, 2\}$}
			\RightLabel{\LJConjL}
			\BinaryInfC{$\SequentCLJ(\ThAssumsEnv, e_1 \conj e_2){e}$}
			\bottomAlignProof
			\DisplayProof
			\quad
			\AxiomC{$\SequentCLJ(\ThAssumsEnv, e_1){e_2}$}
			\RightLabel{\LJImplR}
			\UnaryInfC{$\SequentCLJ{e_1 \impl e_2}$}
			\bottomAlignProof
			\DisplayProof
			\\
			\AxiomC{$\SequentCLJ(\ThAssumsEnv, e_2){e}$}
			\AxiomC{$\SequentCLJ(\ThAssumsEnv){e_1}$}
			\RightLabel{\LJImplL}
			\BinaryInfC{$\SequentCLJ(\ThAssumsEnv, e_1 \impl e_2){e}$}
			\bottomAlignProof
			\DisplayProof
			\quad
			\AxiomC{$\SequentCLJ{e}$}
			\AxiomC{$x \not\in FV(\ThAssumsEnv)$}
			\RightLabel{\LJAllR}
			\BinaryInfC{$\SequentCLJ{\all{x:\tau} e}$}
			\bottomAlignProof
			\DisplayProof
			\quad
			\AxiomC{$\SequentCLJ{e\subst{w/x}}$}
			\RightLabel{\LJExR}
			\UnaryInfC{$\SequentCLJ{\exist{x:\tau}e}$}
			\bottomAlignProof
			\DisplayProof
			\\
			\AxiomC{$\SequentCLJ(\Gamma_T, e){e_1}$}
			\AxiomC{$x \not\in FV(\ThAssumsEnv)$}
			\RightLabel{\LJExL}
			\BinaryInfC{$\SequentCLJ(\Gamma_T,\exist{x:\tau}e){e_1}$}
			\bottomAlignProof
			\DisplayProof
			\quad
			\AxiomC{$\SequentCLJ(\ThAssumsEnv, e_1 \subst{w/x}){e_2}$}
			\RightLabel{\LJAllL}
			\UnaryInfC{$\SequentCLJ(\ThAssumsEnv, \all{x:\tau}e_1){e_2}$}
			\bottomAlignProof
			\DisplayProof
			\end{empheq}
	\end{spreadlines}}
	\vspace*{-1em}
	
	\caption{\emph{\footnotesize{The rules for \LJ{}, standard structural rules are assumed (see Appendix~\ref{appendix:LJ-struct-rules}). In the rules we use $w$ to denote any closed expression that is well-typed with an empty context $\Delta$.
	}}}
	\label{fig:rules-CLJ}
	\vspace*{-1.5em}
\end{figure}


We define the set of \emph{FOL formulae} by induction on the formula shape, in a standard way (see Figure~\ref{fig:terms}). FOL formulae are a subset of well-typed expressions of LDL.
We use $\Gamma_T$ to denote a set of FOL formulae.
The rules in Figure~\ref{fig:rules-CLJ} follow the standard formulation of \LJ{}~\cite{sorensen2006lectures} (including notation $\Gamma_T, \psi$ for $\Gamma_T \cup \{\psi\}$). 
The equivalence closure of the reduction relation (convertibility) is denoted by $\conv$.
We will assume that for each comparison relation $\bowtie$ in the language, we have a corresponding oracle $\bowtie^*$ that decides, for any pair  $r_1, r_2$ of real numbers whether $r_1 \bowtie^* r_2$ holds. Formally, $ r_1 \bowtie^* r_2$ returns $\top$ if the relation holds, and $\bot$ otherwise. It is standard in intuitionistic logic to model negation $\neg e$ as $e \impl \bot$, but we use an equivalent formulation from~\cite{sorensen2006lectures} that introduces intuitionistic 
negation explicitly in the rules. 
 The only additional rule we need here is \textbf{(Arith)}. 
We assume the standard structural rules for \LJ{} which can be found in Appendix~\ref{appendix:LJ-struct-rules}.

Given this we can now define what it means for a differentiable logic $L$ to be \emph{sound} and \emph{complete} with respect to a set of formulae provable in \LJ.

\begin{definition}[Soundness]\label{df:sound}
A logic $L$ is \emph{sound} if for any well-typed formula $\validForm{e}$, and any contexts $\hNetCtx \in \ttype{\Xi}$, $\hRandCtx \in \ttype{q(e)}$, $\hVarCtx \in \ttype{\Delta}$ if $\tempty{e}_L^{\hNetCtx,\hRandCtx, \hVarCtx} = \tempty{\top}_L^{\hNetCtx,\hRandCtx, \hVarCtx}$ then $\SequentCLJ(){e}$.
\end{definition}

\begin{definition}[Completeness]\label{df:compl}
A logic $L$ is \emph{complete} if for any well-typed formula $\validForm{e}$, and any contexts $\hNetCtx \in \ttype{\Xi}$, $\hRandCtx \in \ttype{q(e)}$, $\hVarCtx \in \ttype{\Delta}$ if $\SequentCLJ(){e}$ then $\tempty{e}_L^{\hNetCtx,\hRandCtx, \hVarCtx} = \tempty{\top}_L^{\hNetCtx,\hRandCtx, \hVarCtx}$. 
\end{definition}

\subsubsection{Most Fuzzy DLs are sound but incomplete}
\label{subsec:soundness}

We start with showing that fuzzy DL comparison operators are valid. 

\begin{lemma}[Soundness of FDL comparisons]\label{lem:eval} If  $\validForm{e_1 \bowtie e_2}$ and any contexts $\hNetCtx \in \ttype{\Xi}$, $\hRandCtx \in \ttype{q(e)}$, $\hVarCtx \in \ttype{\Delta}$, then for all $L$ in fuzzy differentiable logics (FDL) the following holds:

If $\tempty{e_1 \bowtie e_2}^{\hNetCtx,\hRandCtx, \hVarCtx}_{L} = \tempty{\top}^{\hNetCtx,\hRandCtx, \hVarCtx}_{L}$ then $\tempty{e_1}^{\hNetCtx,\hRandCtx, \hVarCtx}_{L} \bowtie^* \tempty{e_2}^{\hNetCtx,\hRandCtx, \hVarCtx}_{L}  = \top$. 

If $\tempty{e_1 \bowtie e_2}^{\hNetCtx,\hRandCtx, \hVarCtx}_{L} = \tempty{\bot}^{\hNetCtx,\hRandCtx, \hVarCtx}_{L}$ then $\tempty{e_1}^{\hNetCtx,\hRandCtx, \hVarCtx}_{L} \bowtie^* \tempty{e_2}^{\hNetCtx,\hRandCtx, \hVarCtx}_{L}  = \bot$. 
\end{lemma}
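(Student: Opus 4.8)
The plan is to strip away the contexts and the syntactic structure of $e_1$ and $e_2$, reducing the lemma to a purely arithmetic statement about the fixed interpretation functions of Figure~\ref{fig:expr}, and then to dispatch that statement by a finite case analysis over the fuzzy logics and the comparison operators.

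First I would observe that, by the formation rule for $\validForm{e_1 \bowtie e_2}$ together with the typing rule $(\bowtie)$ of Figure~\ref{fig:types}, both $e_1$ and $e_2$ are well typed of type $\RealType$. Writing $r_1 = \tempty{e_1}^{\hNetCtx,\hRandCtx,\hVarCtx}_{L}$ and $r_2 = \tempty{e_2}^{\hNetCtx,\hRandCtx,\hVarCtx}_{L}$, Lemma~\ref{lem:expr-real} gives $r_1, r_2 \in \Real$. Since $e_1 \bowtie e_2$ is sugar for the prefix application $(\bowtie)\ e_1\ e_2$, applying the (app) clause of Figure~\ref{fig:generic-semantics} twice yields $\tempty{e_1 \bowtie e_2}^{\hNetCtx,\hRandCtx,\hVarCtx}_{L} = \tempty{\bowtie}^{\hNetCtx,\hRandCtx,\hVarCtx}_{L}(r_1)(r_2)$, i.e. the DL-specific comparison function applied to the two reals. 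At this point the contexts no longer play any role, and the lemma becomes: for the interpretation function of each $\bowtie$ and each fuzzy $L$, the premise that it outputs $\tempty{\top}_{L}$ implies $r_1 \bowtie^* r_2$, and the premise that it outputs $\tempty{\bot}_{L}$ implies its negation.

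Second, for each fuzzy logic $L$ (G\"{o}del in Figure~\ref{fig:expr}, the others in Appendix~\ref{appendix:additional-dls}, with the remaining comparisons in Appendix~\ref{app:comparisons}) and each operator $\bowtie$, I would compute the preimages $\{(r_1,r_2) \mid \tempty{\bowtie}_{L}(r_1,r_2) = \tempty{\top}_{L}\}$ and $\{(r_1,r_2) \mid \tempty{\bowtie}_{L}(r_1,r_2) = \tempty{\bot}_{L}\}$ and check that the former is contained in $\{(r_1,r_2) \mid r_1 \bowtie^* r_2\}$ and the latter in its complement; this is exactly what the two implications require. For example, in G\"{o}del $\tempty{==}_{G}(r_1,r_2) = 1 - \tanh|r_1-r_2|$ attains $\tempty{\top}_{G} = 1$ exactly when $|r_1-r_2| = 0$, i.e. when $r_1 = r_2$, which is precisely $r_1 ==^* r_2 = \top$; and since $\tanh$ stays strictly below $1$ on every finite argument, the value $\tempty{\bot}_{G} = 0$ is never attained, so the second implication holds vacuously.

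The only point requiring genuine care — and the main obstacle — is the boundary behaviour of these interpretation functions relative to the sharp oracle $\bowtie^*$. Several fuzzy interpretations attain $\tempty{\top}_{L}$ only on a strict subset of the set on which the comparison really holds, and approach $\tempty{\bot}_{L}$ only asymptotically without ever reaching it; this asymmetry is exactly what later makes these logics sound but incomplete. For the soundness claim here it is harmless and in fact helpful: a top-level set contained in the truth set is all the first implication needs, and a bottom-level set that is empty (or contained in the falsehood set) is all the second needs. I would therefore organise the case analysis to record, for each operator and logic, which implication is witnessed nontrivially and which holds vacuously, tracking each logic's own domain and its conventions for $\tempty{\top}_{L}$ and $\tempty{\bot}_{L}$ as listed in Figure~\ref{fig:expr}, and conclude by ranging over the finitely many cases.
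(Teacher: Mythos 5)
Your proposal is correct and follows essentially the same route as the paper's proof: reduce to real-valued interpretations of $e_1$ and $e_2$ via the typing rules and Lemma~\ref{lem:expr-real}, then perform a finite case analysis over the comparison operators (with the remaining comparisons handled by reduction to the proven ones), checking for each fuzzy interpretation that attaining $\tempty{\top}_L$ forces the oracle to return $\top$ and attaining $\tempty{\bot}_L$ forces $\bot$. Your explicit tracking of which implications hold vacuously (e.g.\ because $1-\tanh|r_1-r_2|$ never reaches $0$) is in fact slightly more careful than the paper's own case analysis, but it does not constitute a different method.
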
 

\begin{proof}
See Appendix~\ref{app:lemma}. This result uses Lemma~\ref{lem:expr-real}.
\end{proof}

\begin{lemma}
	\label{l:s1}
Given a well-typed formula $e$, for any any contexts $\hNetCtx \in \ttype{\Xi}$, $\hRandCtx \in \ttype{q(e)}$, $\hVarCtx \in \ttype{\Delta}$ the following hold:
\begin{enumerate}
	\item If $\tGodel{e}^{\hNetCtx,\hRandCtx, \hVarCtx} = \tGodel{\top}^{\hNetCtx,\hRandCtx, \hVarCtx}$   then $\SequentCLJ(){e}$. 
	\item If $\tGodel{e}^{\hNetCtx,\hRandCtx, \hVarCtx} = \tGodel{\bot}^{\hNetCtx,\hRandCtx, \hVarCtx}$ 	then $\SequentCLJ(e){}$.
\end{enumerate}
\end{lemma}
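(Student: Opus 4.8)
The plan is to prove both parts simultaneously by structural induction on the well-formed FOL formula $e$, following the grammar in Figure~\ref{fig:terms}. The two statements are mutually dependent---soundness of the ``true'' case for an implication or negation typically relies on the ``false'' case for a subformula and vice versa---so I would set up a single induction establishing the conjunction of (1) and (2) at each step. The key feature enabling this is that the G\"{o}del interpretation is the one case where the connectives ($\min$, $\max$, and the $\max(1-\val_1,\val_2)$ for implication) commute with the quantifier interpretation as minimised/maximised expected values, as noted at the end of Section~\ref{subsec:quantifiers}; this is what makes G\"{o}del both sound \emph{and} amenable to the $\mathbb{E}_{\min}/\mathbb{E}_{\max}$ treatment of quantifiers.

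First I would handle the base cases. For atomic comparisons $e_1 \bowtie e_2$, I would invoke Lemma~\ref{lem:eval}: if the interpretation equals $\tGodel{\top}$ then the oracle $\bowtie^*$ returns $\top$, and the rule \LJArithR{} gives $\SequentCLJ(){e_1 \bowtie e_2}$; symmetrically, if it equals $\tGodel{\bot}$ then $\bowtie^*$ returns $\bot$, and combining \LJArithL{} with the $(\bot)$ rule yields $\SequentCLJ(e_1\bowtie e_2){}$. For the inductive step on connectives, I would unfold the G\"{o}del semantics from Figure~\ref{fig:expr}. For conjunction, $\min(\val_1,\val_2) = 1$ forces both $\val_1 = \val_2 = 1$ (since the domain is $[0,1]$ with $1 = \tGodel{\top}$), so the induction hypothesis gives provability of both conjuncts and \LJConjR{} closes the case; $\min(\val_1,\val_2)=0$ forces at least one conjunct to be $0$, and \LJConjL{} with the induction hypothesis handles the false case. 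Disjunction is dual via $\max$, negation uses $\lambda\val.\,1-\val$ together with \LJNegR{}/\LJNegL{}, and implication uses $\max(1-\val_1,\val_2)$.

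The quantifier cases are where the real work lies. For $\forall x{:}\tau.\,e$ with the infinite interpretation $\mathbb{E}_{\min}[(\lambda y.\tGodel{e}^{\hNetCtx,\hRandCtx,\hVarCtx[x\to y]})(Q[x])]$, I would argue that this minimised expected value equalling $\tGodel{\top} = 1$ forces the body's interpretation to be $1$ at the global minimum, and since $1$ is the top of the G\"{o}del order $[0,1]$, the body must interpret to $1$ for \emph{every} value of $x$; the induction hypothesis then provides $\SequentCLJ(){e}$ for an arbitrary $w$, and \LJAllR{} (with the freshness side-condition $x \notin FV(\ThAssumsEnv)$, trivially met for closed formulae) gives the result. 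Dually, if the value is $0$ then the minimum is attained at a witness where the body interprets to $\bot$, and \LJAllL{} instantiated at that witness together with the induction hypothesis for the false case discharges it. The existential cases are symmetric, using $\mathbb{E}_{\max}$, \LJExR{}, and \LJExL{}. The main obstacle I anticipate is making rigorous the step ``the minimised expected value equals $1$ iff the body is identically $1$ on the support'': this requires that $\distr_{\X}$ is genuinely positive on the relevant $\gamma$-ball and a careful $\gamma \to 0$ limit argument connecting $\mathbb{E}_{\min}$ to the actual global minimum of the body, so that no probability-zero escape hatch lets the body dip below $1$ while the expectation stays at $1$. I would either assume positive density on the domain or argue directly that $\mathbb{E}_{\min}$ converges to the infimum of the body over the support, reducing the quantifier case to the behaviour of the body at its extremal point.
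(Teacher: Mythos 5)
Your proposal matches the paper's proof essentially step for step: the paper's argument (Appendix~\ref{app:proof-godel}) likewise proceeds by mutual induction on the shape of $e$, dispatches the comparison base cases via Lemma~\ref{lem:eval} together with the \textbf{Arith} rules and $(\top)$/$(\bot)$, unfolds the G\"{o}del semantics for $\neg$, $\land$, $\impl$ exactly as you describe (with the true/false cases of negation and implication crossing over to the other half of the mutual induction), and handles quantifiers by the same extremal argument that $1$ being the top of $[0,1]$ forces the body to equal $1$ at every input when the minimised expectation is $1$, dually for $0$, followed by the corresponding \LJ{} quantifier rules. The measure-theoretic subtlety you flag --- rigorously connecting $\mathbb{E}_{\min}$ to the actual infimum of the body, which needs positive density or a careful $\gamma \to 0$ argument --- is simply asserted informally in the paper's proof, so your treatment is, if anything, more careful on that one point.
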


\begin{proof}
The auxiliary lemma is proven by case analysis on $e$, mutual induction, and relies on Lemma~\ref{lem:eval}, see Appendix \ref{app:proof-godel}.
\end{proof}

\begin{theorem}[Soundness of G\"{o}del DL]\label{th:s}
	G\"{o}del DL is sound.
\end{theorem}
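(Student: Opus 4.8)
The plan is to obtain the theorem as an immediate corollary of Lemma~\ref{l:s1}. Specialising that lemma to the G\"{o}del logic $G$, part~(1) states precisely that whenever $\tGodel{e}^{\hNetCtx,\hRandCtx,\hVarCtx} = \tGodel{\top}^{\hNetCtx,\hRandCtx,\hVarCtx}$ we have $\SequentCLJ(){e}$, and since $\tempty{\cdot}_G = \tGodel{\cdot}$ this is exactly the soundness condition of Definition~\ref{df:sound} instantiated with $L = G$. No further argument is therefore needed once Lemma~\ref{l:s1} is in hand; all of the real work lives in establishing that lemma, so I would organise the proof of the theorem entirely around it.

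For the lemma itself I would proceed by mutual induction on the structure of the well-formed FOL formula $e$ (Figure~\ref{fig:terms}), proving clauses~(1) and~(2) simultaneously. The reason the $\bot$-clause~(2) must be carried along is that the G\"{o}del connectives swap the roles of the two extremal truth values: since $\tGodel{\top} = 1$, $\tGodel{\bot} = 0$ and $\tGodel{\neg}v = 1 - v$, a formula $\neg e_1$ evaluates to $\tGodel{\top}$ exactly when $e_1$ evaluates to $\tGodel{\bot}$, so the provability of $\neg e_1$ via \LJNegR{} has to be fed by the refutability statement $\SequentCLJ(e_1){}$ supplied by clause~(2). The implication case is dual, using $\tGodel{\impl}v_1\,v_2 = \max(1-v_1,v_2)$ together with \LJImplR{} and \LJImplL{}.

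In the base case $e = e_1 \bowtie e_2$ I would appeal to Lemma~\ref{lem:eval}: if the comparison interprets to $\tGodel{\top}$ then $\tGodel{e_1}^{\hNetCtx,\hRandCtx,\hVarCtx}\ \bowtie^*\ \tGodel{e_2}^{\hNetCtx,\hRandCtx,\hVarCtx} = \top$, and \LJArithR{} closes the sequent $\SequentCLJ(){e_1 \bowtie e_2}$; the $\bot$-case is symmetric, discharging the hypothesis through \LJArithL{}. Lemma~\ref{lem:expr-real} (itself a corollary of Theorem~\ref{th:type-sound}) guarantees that $\tGodel{e_1}$ and $\tGodel{e_2}$ are genuine reals, so the oracle $\bowtie^*$ applies. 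The propositional connectives $\conj$ and $\disj$ reduce to the $\min$/$\max$ interpretations: for instance $\min(\tGodel{e_1},\tGodel{e_2}) = 1$ forces both conjuncts to equal $1$, whence the induction hypotheses and \LJConjR{} yield the result, and the dual argument for $\disj$ with $\max$ feeds clause~(2).

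The step I expect to be the main obstacle is the quantifier case. Here I would exploit the fact, noted in Section~\ref{subsec:quantifiers}, that the G\"{o}del interpretation is the unique one among those studied whose $\min$/$\max$ connectives commute with the minimised and maximised expected values $\mathbb{E}_{\min}$ and $\mathbb{E}_{\max}$ used to interpret $\forall$ and $\exists$. Concretely, $\tGodel{\forall x{:}\tau.\,e} = \tGodel{\top}$ should force the body to interpret to $\tGodel{\top}$ at the extremal (minimising) point, and one must then convert this pointwise semantic information into the generic \LJAllR{} rule, choosing an eigenvariable fresh for $\Gamma_T$, while for $\exists$ one must instead extract a concrete closed witness $w$ to feed \LJExR{}. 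Making this transition rigorous — matching the semantic extremum against the syntactic eigenvariable/witness discipline of \LJ{}, and handling the finite-type quantifiers separately via their conjunction/disjunction unfolding — is the delicate part, and is where I would concentrate the bulk of the effort.
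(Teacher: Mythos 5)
Your proposal is correct and follows exactly the paper's route: the theorem is discharged as an immediate corollary of Lemma~\ref{l:s1} (part~1 instantiated to the G\"{o}del interpretation), which is precisely the paper's one-line proof. Your accompanying sketch of Lemma~\ref{l:s1} itself also mirrors the paper's Appendix~\ref{app:proof-godel} argument — mutual induction over the two clauses, comparisons handled via Lemma~\ref{lem:eval} with \LJArithR{}/\LJArithL{}, the $\min$/$\max$ analysis for the connectives, and the extremal-expectation reasoning for the quantifiers — including your accurate observation that the quantifier cases (converting the semantic extremum into the eigenvariable/witness discipline of \LJ{}) are the least rigorous part of the paper's own development.
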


\begin{proof}
Soundness is obtained as a corollary of Lemma~\ref{l:s1}.
\end{proof}
%
%


\noindent Other fuzzy DLs, with the exception of Łukasiewicz and Yager are also sound with respect to \LJ, with proofs following the same scheme. For the Łukasiewicz DL, we cannot prove the equivalent of Lemma~\ref{l:s1}. 
Intuitively, this is because Łukasiewicz implication is not strong enough: e.g. $ \tempty{e_1 \impl e_2} $ evaluates to $1$ as long as $\tempty{e_1} \leq \tempty{e_2}$. Because $\tempty{e_1}$ and $\tempty{e_2}$ can take any values in the interval $[0,1]$ (as long as $\tempty{e_1} \leq \tempty{e_2}$), it is  
impossible to apply an inductive argument on $\tempty{e_1}$ and $\tempty{e_2}$. 
The Yager DL is a generalisation of the Łukasiewicz DL, in particular, when $p = 1$ in the Yager DL translation, it coincides with the Łukasiewicz DL. Thus, the same problem stands for Yager.  

Proper study of the question of (alternative approaches to) soundess of the Łukasiewicz and Yager DLs  is left for future work.  We note that the recent work by \citet{BMPP23} has shed light on logical properties of Łukasiewicz logic, and we intend to build upon those results.

Theorem~\ref{th:s} does not hold for the opposite direction of implication (and thus completeness fails).
  To see this, consider the following derivation for 
$\SequentCLJ{\neg(5 \leq 3)}$ in \LJ, which we would have liked to be equivalent to having $\tGodel{\neg(5 \leq 3)} \in \tGodel{\top}$. 

{\footnotesize{
\begin{prooftree}
  \AxiomC{}
      \RightLabel{\rulelabel{$\bot$}}
      \UnaryInfC{$\Gamma_T, \bot \vdash$}
      \RightLabel{\rulelabel{$\mathbf{Arith-L}$}}
      \UnaryInfC{$\Gamma_T, 5 \leq 3 \vdash$}
      \RightLabel{\rulelabel{$\mathbf{\neg-R}$}}
      \UnaryInfC{$\Gamma_T \vdash \neg(5 \leq 3)$}
    \end{prooftree}}}

\noindent However, taking $\tGodel{5 \leq 3} = 1 - 0.25 = 0.75$, we get 
$\tGodel{\neg(5 \leq 3)} = 1 - 0.75 = 0.25 \notin \tGodel{\top}$. This is the problem that will be common for all Fuzzy DLs.


\subsubsection{DL2 is sound and incomplete}\label{subsec:soundness2}

LDL helps to establish a generic approach to proving soundness results for a variety of DLs.
Firstly, a variant of Lemma~\ref{lem:eval} holds for DL2. 
 Next we obtain its soundness for DL2, where DL2 has just the connectives $\land, \lor$ and quantifiers, as $\impl$ and $\neg$ are not defined.

\begin{theorem}[Soundness of DL2]\label{th:d1}
DL2 is sound. 
\end{theorem}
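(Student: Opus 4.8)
The plan is to mirror the proof of Theorem~\ref{th:s} (soundness of Gödel DL): first establish the DL2 analogue of the comparison lemma, then prove an auxiliary lemma by induction on the shape of the FOL formula, and finally read off soundness as a corollary. As the surrounding text notes, a variant of Lemma~\ref{lem:eval} holds for DL2, and here only its top-value branch is non-vacuous, since a DL2 comparison never reaches $\tempty{\bot}^{\hNetCtx,\hRandCtx,\hVarCtx}_{DL2}=-\infty$. Using Lemma~\ref{lem:expr-real} to guarantee that the operands of a comparison evaluate to genuine reals, one checks directly from the DL2 column of Figure~\ref{fig:expr} that $\tempty{e_1 \bowtie e_2}^{\hNetCtx,\hRandCtx,\hVarCtx}_{DL2} = \tempty{\top}^{\hNetCtx,\hRandCtx,\hVarCtx}_{DL2} = 0$ exactly when the oracle $\bowtie^*$ confirms the comparison (for instance $-\max(v_1 - v_2, 0) = 0$ iff $v_1 \leq v_2$), so that the $(\mathbf{Arith})$ rule together with the $(\top)$ rule discharges the atomic case.

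The core is the induction. Since DL2 interprets only $\wedge$, $\vee$ and the quantifiers ($\impl$ and $\neg$ being absent, as stated), I expect the single ``truth'' direction to suffice, namely that if $\tempty{e}^{\hNetCtx,\hRandCtx,\hVarCtx}_{DL2} = \tempty{\top}^{\hNetCtx,\hRandCtx,\hVarCtx}_{DL2}$ then $\SequentCLJ(){e}$; unlike the Gödel case, the mutual ``falsity'' claim of Lemma~\ref{l:s1} should not be needed, precisely because no connective in the DL2 fragment consumes a refutation. For conjunction, $\tempty{e_1 \wedge e_2}_{DL2} = \tempty{e_1}_{DL2} + \tempty{e_2}_{DL2}$ with both summands in $[-\infty, 0]$, so the sum is $0$ iff both operands are $0$; the induction hypothesis then proves each conjunct and the right-conjunction rule combines them. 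For disjunction, $\tempty{e_1 \vee e_2}_{DL2} = -\,\tempty{e_1}_{DL2} \times \tempty{e_2}_{DL2}$ is $0$ iff at least one operand is the top value $0$ (comparisons always evaluate strictly above $-\infty$, so the awkward $0 \cdot (-\infty)$ case never arises in this fragment), whence the induction hypothesis proves one disjunct and the right-disjunction rule concludes.

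The delicate cases, and the main obstacle, are the quantifiers, where the argument must bridge the semantic statement ``the body attains the top value'' with the syntactic $\forall$- and $\exists$-right rules of Figure~\ref{fig:rules-CLJ}. Reading the minimised (resp. maximised) expected value as the value that the body $g = \lambda y.\tempty{e}_{\hVarCtx[x \rightarrow y]}$ takes at its global minimiser (resp. maximiser), the universal case is favourable: since every DL2 value is $\leq 0$ and $x_{\min}$ is a global minimum, $\mathbb{E}_{\min}[g(Q[x])] = 0$ forces $g(y) = 0$ for every $y$, so the body holds at every value of $x$, and one applies the $\forall$-right rule, whose side-condition $x \notin FV(\Gamma_T)$ is trivially met for the empty antecedent. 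The existential case is sharper: $\mathbb{E}_{\max}[g(Q[x])] = 0$ only witnesses the body attaining $\tempty{\top}_{DL2}$ at the maximiser $x_{\max}$, and the $\exists$-right rule demands a closed, well-typed witness term $w$ with $\tempty{e[x/w]}_{DL2} = 0$. Turning the semantic extremiser into syntax (via the language's real and index constants) and relying on the extremum actually being attained is exactly where the measure-theoretic informality of the $\mathbb{E}_{\min}/\mathbb{E}_{\max}$ definition must be pinned down; a related delicacy is that atomic comparisons are only decidable by the oracle once their free variables carry concrete values, so the induction hypothesis for quantifier bodies has to be set up with care.

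Finally, Theorem~\ref{th:d1} follows by instantiating the auxiliary lemma at the empty bound context, exactly as Theorem~\ref{th:s} was obtained from Lemma~\ref{l:s1}. One caveat worth flagging is the status of $\impl$ and $\neg$: DL2 interprets them only after translation into $\vee$ and negated comparisons with negation pushed to the atoms, and since some of the classical normalisation steps (for example rewriting $\neg\forall$ as $\exists\neg$) are not intuitionistically valid, the theorem should be read as confining attention to the fragment DL2 interprets natively, namely comparisons closed under $\wedge$, $\vee$ and quantification.
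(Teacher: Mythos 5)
Your proposal matches the paper's proof essentially step for step: a DL2 comparison lemma with only the top-value branch non-vacuous (the paper's Lemma~\ref{lem:eval1}), followed by a single, non-mutual induction on formula shape with the same case analysis for $\top$, comparisons, $\wedge$, and the two quantifiers, each discharged by the corresponding right rules of \LJ{} exactly as in Appendix~\ref{app:dl2-adequacy}. If anything you are slightly more thorough than the paper, whose proof silently omits the $\vee$ case (despite the theorem statement including it) and does not flag the existential-witness and $0\cdot(-\infty)$ subtleties you correctly identify.
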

\begin{proof}
See Appendix~\ref{app:dl2-adequacy}.
\end{proof}

The proof 
follows the structure of similar proofs for FDLs. This is a welcome simplification, and it will be useful in future computer formalisations of these logics. Seeing the result is given for an incomplete set of connectives, we omit discussion of completeness for DL2.  

\subsubsection{STL is neither sound nor complete}\label{subsec:STL-S}
Incompleteness of STL is discussed in~\S\ref{subsec:soundness}.
In addition, note that its connectives and quantifiers lack properties that are provable in \LJ: e.g.~associativity of conjunction, or commutativity of a universal quantifier with respect to conjunction. 


The definition for soundness relies on the interpretation of $\tSTL{\top}^{\hNetCtx,\hRandCtx, \hVarCtx}$.  \citet{varnai} do 
	not formally define $\tSTL{\top}^{\hNetCtx,\hRandCtx, \hVarCtx}$,
	and for that reason alone soundness is unattainable for the original STL. 
	 We show an attempt to propose an interpretation $\tSTL{\top}^{\hNetCtx,\hRandCtx, \hVarCtx} = \infty$ and show how we use LDL's generic approach to analyse the result. 
	Informally, their intuition is that any positive number in $[0, \infty)$
	 belongs to $\tSTL{\top}^{\hNetCtx,\hRandCtx, \hVarCtx}$. This motivates, for example,
	$\tSTL{ 3 \leq 5}^{\hNetCtx,\hRandCtx, \hVarCtx} = 2$ or  $\tSTL{ 3 == 3}^{\hNetCtx,\hRandCtx, \hVarCtx} = 0$. But none of these evaluates to $\infty$. Thus, although the soundness proof goes through, the set of expressions that a variant of Lemma~\ref{lem:eval} for STL covers will be empty: such soundness would say nothing about the FOL fragment of LDL.  A solution would be to re-define interpretation for all predicates in a binary fashion: e.g. \emph{for $ a_1 == a_2$, return $\infty$ if $a_1 = a_2$ and return $-\infty$ otherwise}.
	But this would sacrifice continuity and smoothness of the resulting loss functions, -- key properties for STL design.

\subsection{Logical and Geometric Properties of DLs}
\label{sec:properties}

 Both \citet{varnai} and \citet{VANKRIEKEN2022103602} suggest a selection of desirable properties for their DLs. Some are logical properties (e.g. associativity or commutativity) and some have a geometric origin, such as (weak) smoothness or shadow-lifting.    
   As some of the DLs we consider do not have associative connectives, we will use $\conjM$ as a notation for conjunction of exactly $ M $ conjuncts. Similar definition can be made for disjunction, which we omit. 
 In all of the following definitions we will omit the contexts for clarity as they do not change. We will denote a well-typed formula by $e$.
 Starting with geometric properties, 
smoothness is generally desirable from optimisation perspective as it aids gradient based methods used in neural network training~\cite{LeeRY23}:

  \begin{definition}[Weak smoothness]
	The $\tempty{e}_L$ is \textit{weakly smooth} if it is continuous everywhere and its gradient is continuous at all points in the interval where there is a unique minimum.
\end{definition}

\begin{definition}[Scale invariance for $\conjM$]
	The $\tempty{e}_L$ is \textit{scale-invariant} if, for any real $ \alpha \leq 0$ 
	\begin{equation*}
	\alpha \tempty{\conjM (A_1, ..., A_M)}_L = \tbig{\conjM}_L (\alpha \tempty{A_1}_L, ..., \alpha \tempty{A_M}_L)
	\end{equation*}
\end{definition}

\emph{Shadow-lifting} is a property original to~\citet{varnai} and motivates the STL conjunction. 
It characterises gradual improvement when training the neural network: 
if one conjunct increases, the value of entire conjunction should increase as well. 

\begin{definition}[Shadow-lifting property for $\conjM$]
	The $\tempty{e}_L$ satisfies the \textit{shadow-lifting property} if, $\forall i. \tempty{A_i}_L \neq 0$:
	
	\begin{equation*}
	\left. \dfrac{\partial \tempty{\conjM(A_1, ..., A_M)}_L}{\partial \tempty{A_i}_L}\right\rvert_{A_1, ..., A_M} >0
	\end{equation*}
	where $ \partial $ denotes partial differentiation.
\end{definition}

Coming from the logic perspective, we have the following desirable properties: 

\begin{definition}[Commutativity, idempotence and associativity of  $\conjM$]
	The $\tempty{e}_L$ is \emph{commutative} if for any permutation $\pi$ of the integers $i \in {1, ..., M}  $
	\begin{equation*}
	\tempty{\conjM (A_1, ..., A_M)}_L = \tempty{\conjM (A_{k_{\pi(1)}}, ..., A_{k_{\pi(M)}})}_L
	\end{equation*}
	it is \emph{idempotent} and \emph{associative} if
	\begin{align*} 
	\tempty{\conjM(A, ..., A)}_L &= \tempty{A}_L\\
	\tempty{\conjL{2}(\conjL{2}(A_1, A_2), A_3)}_L &= \tempty{\conjL{2}(A_1, \conjL{2}(A_2, A_3))}_L
	\end{align*} 
	
\end{definition}



The below property has not appeared in the literature before, but as previous sections have shown, it proves to be important when generalising DLs to FOL: 

\begin{definition}[Quantifier commutativity]
	The $\tempty{e}_L$ satisfies the \emph{quantifier commutativity} if 
	\begin{equation*}
		\tempty{\forall x. \conjM(A_1, ..., A_M)}_L = \tempty{\conjM(\forall x. A_1, \ldots, \forall x. A_M)}_L
	\end{equation*}
	Similarly for $\exists$ and $\lor$. 
\end{definition}


\begin{table}[t]

	\caption{Three groups of properties of DLs: geometric, logical and semantical. Properties which have been established in other works have relevant citations. Entries with * denote DLs for which the smoothness property holds in propositional case.}
	\label{tab:properties}
	\begin{adjustbox}{center}
		\begin{tabular}{|p{0.28\textwidth}|c|c|c|c|c|c|}
			\hline 
			\textbf{Properties:}& DL2 &  G\"{o}del & Łukasiewicz& Yager & Product & STL\\ 
			\hline \hline
			Weak Smoothness & yes* & no &no &no &yes* & yes\\
			\hline
			Shadow-lifting  & yes & no &no&no&yes& yes \cite{varnai}\\
			\hline 
				Scale invariance & yes & yes &no&no&no& yes \cite{varnai}\\
			\hline \hline
			Idempotence & no & yes \cite{VANKRIEKEN2022103602} &no \cite{cintula2011handbook}&no \cite{klement2004triangular}&no \cite{cintula2011handbook}& yes \cite{varnai}\\
			\hline 
			Commutativity & yes & yes \cite{VANKRIEKEN2022103602} &yes \cite{VANKRIEKEN2022103602}&yes \cite{VANKRIEKEN2022103602}&yes \cite{VANKRIEKEN2022103602}& yes \cite{varnai}\\
			\hline 
					Associativity &yes &yes \cite{VANKRIEKEN2022103602}&yes \cite{VANKRIEKEN2022103602}&yes \cite{VANKRIEKEN2022103602}&yes \cite{VANKRIEKEN2022103602}&no \cite{varnai}\\
			\hline
			Quantifier commutativity & no & yes & no & no& no & no \\
			\hline 
			Soundness & yes & yes & no & no & yes & no \\
			\hline \hline
			\end{tabular}
		\\
	\end{adjustbox}		
	\setlength{\belowcaptionskip}{-15pt}
	\vspace*{-1em}
\end{table}


Table \ref{tab:properties} summarises how different DLs compare and contains results already established in literature (as cited) as well as provides new results. While we do not provide proofs, the reader can fine an in-depth explanation of the reasons behind the mentioned properties in Appendix~\ref{app:properties}.
 The table inspires the following observations. 



\begin{inparaenum}[i)]
\item Starting with geometric properties, smoothness is traditionally important for differentiation~\cite{LeeRY23}, and connectives for DL2 and Product DL satisfy the property. However, in the FOL extension only STL remains weakly smooth, see also discussion of \S~\ref{subsec:STL-S}.

\item For the remaining geometric properties,  DL2 and STL 
satisfy shadow-lifting and scale invariance, but Fuzzy logics disagree on these. 
Thus, only STL satisfies all geometric properties. 

\item For logical properties, \citet{varnai} suggested that a DL cannot both satisfy shadow-lifting, idempotence and associativity. However, their proof of this fact fails within LDL. We leave formal investigation of this conjecture for future work.

\item Since the semantics of quantifiers are defined as global minima and maxima, only connectives of G\"{o}del DL commute with respect to them.
This is a highly desirable property, that has not been previously discussed in the literature. It impacts completeness of DLs (cf. \S~\ref{subsec:soundness} and \ref{subsec:STL-S}).

\item Only G\"{o}del DL satisfies all logical properties, but it fails smoothness and shadow-lifting properties. On the opposite side of the spectrum, STL  behaves well geometrically, but fails associativity and quantifier commutativity, and has unresolved issues with soundness (\S~\ref{subsec:STL-S}). It warrants future research whether a DL with optimal combination of logical and geometric  properties may be formulated. 
\end{inparaenum}

\subsection{Empirical Evaluation}
\label{sec:eval}

LDL offers a unified framework to evaluate different DLs.
 We have implemented the syntax and semantics as an extension to the Vehicle tool ~\cite{Vehicle} which  supports the general syntax of LDL. The implementation is highly modular due to the semantics of each DL sharing the same semantics for the core of the syntax as explained in Section~\ref{sec:semantics}. We explain the workflow for our running example -- 
the robustness property. 

Taking the specification of Example~\ref{ex:wte}, it is compiled by Vehicle to a loss function in Python where it can be used to train a chosen neural network. The loss function resulting from the LDL semantics is typically used in combination with a standard loss function such as cross-entropy.
If we denote cross-entropy loss together with its inputs as $\mathcal{L}_{CE}$ and the LDL loss together with its inputs and parameters as $\mathcal{L}_{DL}$ then the final loss will be of the form:
$\mathcal{L} = \alpha \cdot \mathcal{L}_{CE} + \beta \cdot \mathcal{L}_{DL}$
where the weights $\alpha, \beta \in \Real$ are parameters.
For the preliminary\ tests we used a simple 2-layer network to classify MNIST images~\cite{deng2012mnist}. As seen in Figure~\ref{fig:experiments} we have tested the effect of varying the $\alpha, \beta \in \Real$ parameters and therefore the weight of the custom loss in training on accuracy and constraint satisfaction; the results are broadly consistent with trends reported in the literature previously. 

We note that formulating heuristics for finding minima/maxima of loss functions, which are necessary for the implementation of quantifier interpretation, can be very complex, and are beyond the scope of this paper.
Figure~\ref{fig:experiments} shows results implemented using random sampling instead of sampling around global maxima or minima.
We leave implementation of heuristics of finding global minima/maxima, as well as thorough experimental study, for future work.

\begin{figure}
	\centering
	\begin{subfigure}{.4\textwidth}
		\includegraphics[width=1\linewidth]{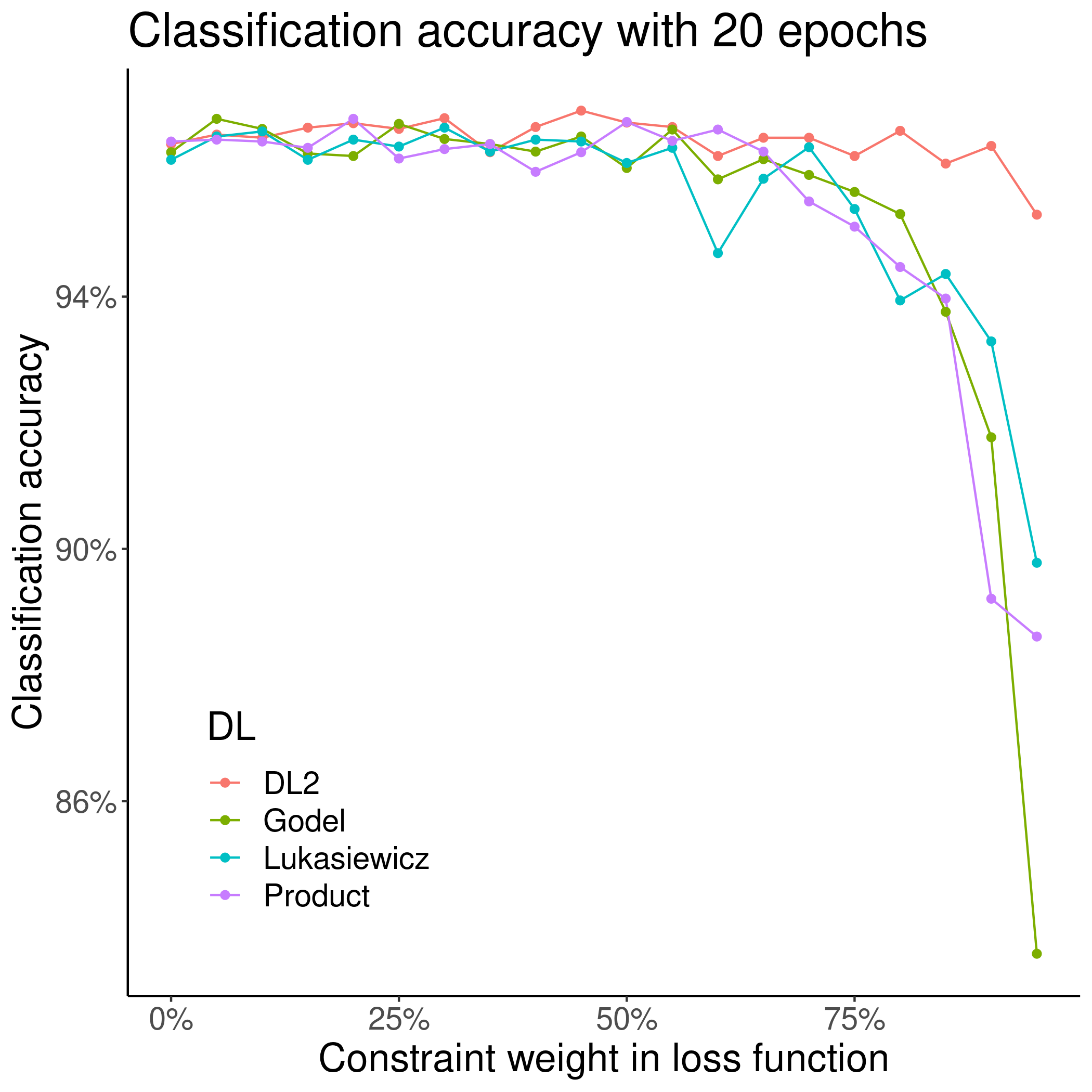}
	\end{subfigure}
		\begin{subfigure}{.4\textwidth}
		\includegraphics[width=1\linewidth]{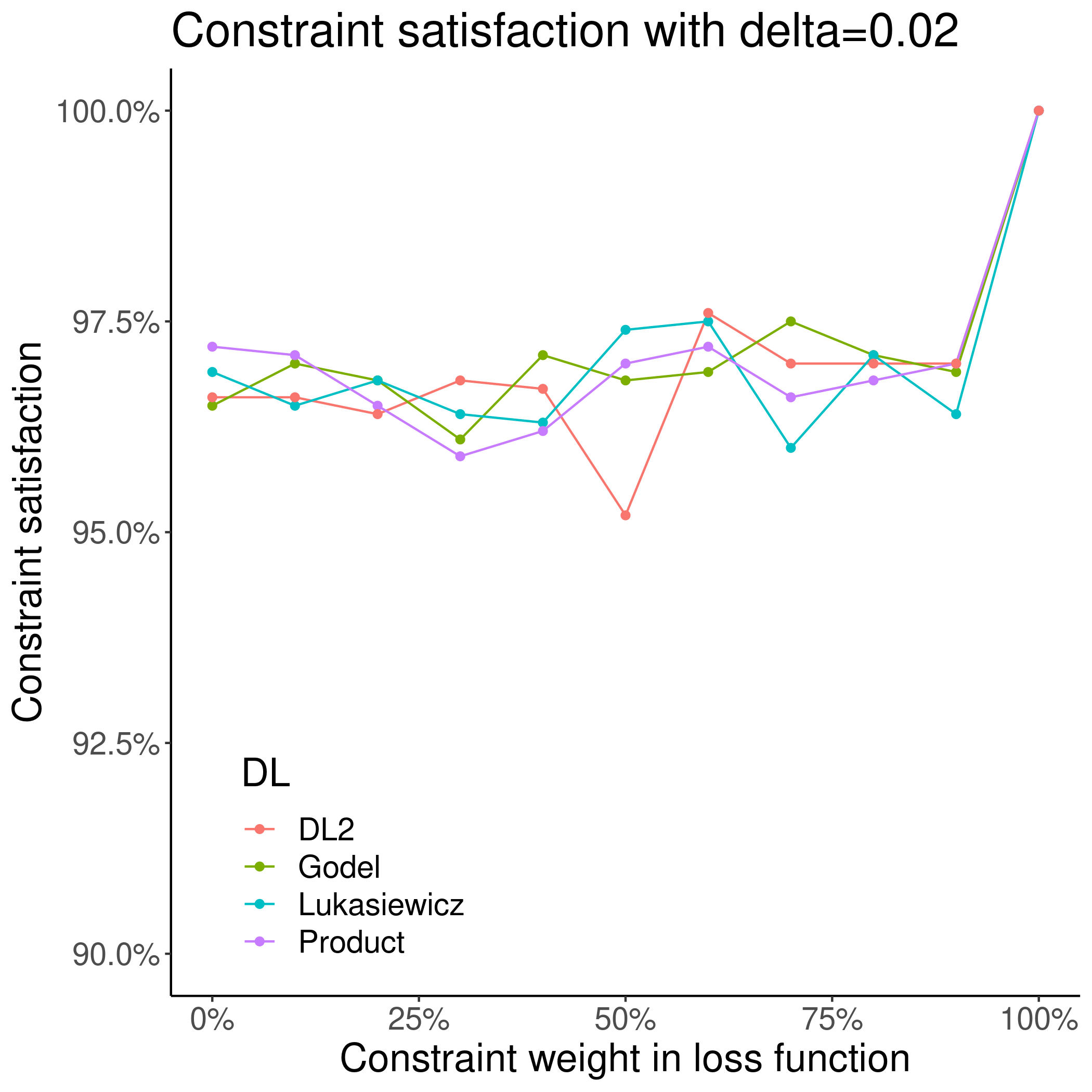}
	\end{subfigure}
\caption{Performance of the network trained with different custom losses with varying parameters $\alpha, \beta \in \Real$ measured by relative weight of $\beta$ in relation to $\alpha$.}
\label{fig:experiments}
	\vspace*{-1.5em}
\end{figure}

\section{Conclusions, Related and Future Work}

\paragraph{Conclusions.}
We have presented a general language (LTL) for expressing properties of neural networks. Our contributions are: 1) LDL  generalises other known DLs and provides a language that is rich enough to express complex verification properties; 2) with LDL we achieved the level of rigour that allows to formally separate the formal language from its semantics, and thus opens a way for systematic analysis of properties of different DLs; 3) by defining different DLs in LDL, we proved properties concerning their soundness and resulting loss functions, and opened the way for uniform empirical evaluation of DLs. We now discuss related work.

\paragraph{Learning Logical Properties.} 
There are many methods of passing external knowledge in the form of logical constraints to neural networks. The survey by~\citet{giunchiglia2022deep} discusses multitude of approaches including methods based on loss functions~\cite{pmlr-v80-xu18h,fischer2019dl,VANKRIEKEN2022103602}, to which LDL belongs, but also others such as tailoring neural network architectures~\cite{garcez2019neural}, or  guaranteeing constraint satisfaction at the level of neural network outputs~\cite{hoernle2022multiplexnet,dragone2021neuro}. As \citet{giunchiglia2022deep} establish, the majority of approaches are tailored to specific problems, and only \citet{fischer2019dl} and \citet{VANKRIEKEN2022103602} go as far as to include quantifiers. LDL generalises both.

\textbf{Analysis of properties of loss functions.} Property analysis, especially of smoothness~\cite{LeeRY23} or bilateral properties~\cite{nie2018investigation}, is a prominent field~\cite{wang2022comprehensive}. One of LDL's achievements is to expose trade-offs between satisfying desired geometric and logic properties of a loss functions.

\textbf{Neural Network Verification.} While this work does not attempt to \emph{verify} neural networks, we draw our motivation from this area of research. It has been observed in verification literature that neural networks often fail to satisfy logical constraints~\cite{wang2018efficient}. One of proposed solutions is training the NN to satisfy a constraint prior to verifying them~\cite{hu2016harnessing,pmlr-v80-xu18h}. This belongs to an approach referred to as \emph{continuous verification}~\cite{KKK20,CKDKKAE22} which focuses on the cycle between training and verification. LDL fits into this trend. Indeed, the tool Vehicle that implements LDL is also built to work with NN verifiers~\cite{daggitt2023compiling}.  

\textbf{Logics for Uncertainty and Probabilistic Logics.} LDLs have a strong connection to fuzzy logic~\cite{VANKRIEKEN2022103602}. Via the use of probability distributions and expectations, we draw our connection to Probabilistic Prolog and similar languages~\cite{de2015probabilistic,10.5555/1625275.1625673}. We differ as none of those approaches can be used to formulate loss functions, which is the main goal of LDL.    

\textbf{Adversarial training.} Starting with the seminal paper by~\citet{szegedy2013intriguing}, thousands of papers in machine learning literature have been devoted to adversarial attacks and robust training of neural networks. Majority of those papers does not use a formal logical language for attack or loss function generation. LDL opens new avenues for this community, as allows one to formulate other properties of interest apart from robustness, see e.g.~\cite{CKDKKAE22,fomlas22}.

   
\paragraph{Future work.}
We intend to use LDL to further study the questions of a suitable semantics for DL, both proof-theoretic and denotational, possibly taking inspiration from~\cite{BMPP23}. For proof-theoretic semantics, we need to find new DLs with tighter correspondence to \LJ; and moreover new calculi can be developed on the basis of \LJ ~to suit the purpose. Since loss functions are used in computation, we conjecture that constructive logics (or semi-constructive logics as in~\cite{BMPP23}) will be useful in this domain. For denotational semantics, we intend to explore Kripke frame semantics.  
Future work could also include finding the best combination of logical and geometric properties in a DL and formulating new DLs that satisfy them. Thorough evaluation of performance of all the DLs is also left to future work: it includes formulating heuristics for finding local/global maxima for quantifier interpretation, and evaluation of LDL effectiveness within the continuous verification cycle. Finally,
 finding novel ways of defining quantifiers that commute with DL connectives is an interesting challenge.

\section{Acknowledgements}
This work was supported by the EPSRC grant EP/T026952/1, \emph{AISEC: AI Secure and Explainable by Construction} and the EPSRC DTP Scholarship for N.~\'{S}lusarz. We thank anonymous referees, James McKinna, Wen Kokke, Bob Atkey and Emile Van Krieken for valuable comments on the early versions of this paper, and Marco Casadio for contributions to the Vehicle implementation. 

\bibliographystyle{plainnat}
\bibliography{bibliography,Natalia}
\newpage
\appendix

\section{Supplementary Definitions for Different DLs}
\subsection{Additional DLs based on Fuzzy logic}
\label{appendix:additional-dls}

LDL can be used to express other DLs based on fuzzy logic aside from ones defined in Section~\ref{sec:semantics}. Figure~\ref{fig:expr1} gives a few more examples originally from~\citet{VANKRIEKEN2022103602}.
The semantics of comparison operators are omitted as they are identical to those of the \godel ~logic.

\begin{figure}[t]
	\footnotesize
	\renewcommand{\arraystretch}{1.3}
	\centering
		\begin{tabular}{|c|c|c|c|}
			\hline 
			Syntax & 
			Łukasiewicz & 
			Yager &
			product \\ 
			\hline 
			$\tempty{\BoolType}_L$ & 
			$[0, 1]$ &
			$ [0, 1] $ & 
			$ [0, 1] $
			\\ \hline
			$\tempty{\top}_L$ & 
			1 & 
			1 &
			1
			\\ \hline
			$\tempty{\bot}_L$ & 
			0 &
			0 &
			0
			\\ \hline
			$\tempty{\neg}_L$ & 
			$\lam{\val}1 - \val$ &
			$\lam{\val}1 - \val$ &
			$\lam{\val}1 - \val$
			\\ \hline 
			$\tempty{\wedge}_L$ & 
			$ \lam{\val_1,\val_2} \max(\val_1+\val_2-1,0)) $ &
			$ \lam{\val_1,\val_2}\max (1 - ((1-\val_1)^p + (1-\val_2)^p)^{1/p},0)$ &
			 $\lam{\val_1,\val_2} \val_1 \cdot \val_2$
			\\ \hline 
			$\tempty{\vee}_L$& 
			$ \lam{\val_1,\val_2} \min (\val_1+\val_2,1)  $ &
			$ \lam{\val_1,\val_2}\min ((\val_1^p + \val_2^p)^{1/p},1)$ & 
			$\lam{\val_1,\val_2} \val_1 + \val_2 - \val_1 \times \val_2$
			\\ \hline 
			$\tempty{\impl}_L$ & 
			$ \lam{\val_1,\val_2} \min (1 -\val_1+\val_2,1)  $ & 
			 - &
			$ \lam{\val_1,\val_2} 1 - \val_1 + \val_1 \times \val_2 $
			\\ \hline
			$ \tempty{==}_L $ &
			\coloured{$\lambda \val_1, \val_2. 1 -  \tanh |\val_1-\val_2|$} &
			\coloured{$\lambda \val_1, \val_2. 1 -  \tanh |\val_1-\val_2|$} &
			\coloured{$\lambda \val_1, \val_2. 1 -  \tanh |\val_1-\val_2|$}
			\\ \hline
			$ \tempty{\leq}_L $ &
			\coloured{$\lam{\val_1, \val_2}{1-\max(\tanh |\val_1-\val_2|, 0)}$} &
			\coloured{$\lam{\val_1, \val_2}{1-\max(\tanh |\val_1-\val_2|, 0)}$} &
			\coloured{$\lam{\val_1, \val_2}{1-\max(\tanh |\val_1-\val_2|, 0)}$}
			\\ \hline
		\end{tabular}
	\caption{Semantics of LDL expressions dependent on the choice of DL. \coloured{Colour} denotes parts of semantics added for LDL and not defined originally. In the implication row ``-'' denotes that the semantics of implication was not provided separately and is instead defined with negation and disjunction in a standard manner. In the Yager DL, $p\geq 1$ is a parameter.}
	\label{fig:expr1}
\end{figure}

\subsection{Remaining comparisons}
\label{app:comparisons}

We have defined the semantics of both $\leq$ and $==$ in Section~\ref{subsec:expressions}. In many logics the semantics of other comparisons could be expressed in terms of the semantics of $\leq$, $==$ and logical connectives. However, as not all DLs presented have negation, this approach is not feasible in general and therefore we define all the remaining comparisons in Figure~\ref{fig:more-comparisons}.

\begin{figure}
	\footnotesize
	\centering
	\renewcommand{\arraystretch}{1.3}
	\begin{tabular}{|c|c|c|c|}
		\hline 
		Syntax & DL2 & Fuzzy Logics & STL \\ 
		\hline 

		$\tempty{\neq}_{L}$ &
		$ \lam{\val_1,\val_2}{-\xi[\val_1=\val_2]}  $&
		\coloured{$ \lam{\val_1,\val_2}{1-[\val_1=\val_2]}  $}&
		\coloured{$ \lam{\val_1,\val_2}{-\xi[\val_1=\val_2]}  $}\\
		\hline
		$ \tempty{<}_{L} $ &
		$\lam{\val_1, \val_2} \tDLtwo{(\val_1\leq\val_2)\land (\val_1 \neq \val_2)}$ &
		\coloured{$\lam{\val_1, \val_2} \tempty{(\val_1\leq\val_2)\land (\val_1 \neq \val_2)}_{FDL}$}&
		\coloured{$\lam{\val_1, \val_2} \tSTL{(\val_1\leq\val_2)\land (\val_1 \neq \val_2)}$}\\
		\hline
		$ \tempty{\geq}_{L} $ &
		$\lam{\val_1, \val_2}{\val_2 \leq \val_1}$ &
		\coloured{$\lam{\val_1, \val_2}\val_2 \leq \val_1$} &
		\coloured{$\lam{\val_1, \val_2}\val_2 \leq \val_1$}
		\\
		\hline
		$ \tempty{>}_{L} $ &
		$\lam{\val_1, \val_2} \tDLtwo{(\val_1\geq\val_2)\land (\val_1 \neq \val_2)}$&
		\coloured{$\lam{\val_1, \val_2} \tempty{(\val_1\geq\val_2)\land (\val_1 \neq \val_2)}_{FDL}$}&
		\coloured{$\lam{\val_1, \val_2} \tSTL{(\val_1\geq\val_2)\land (\val_1 \neq \val_2)}$}\\
		\hline
	\end{tabular}
\caption{Semantics of comparisons between terms in LDL that are dependant on the choice of DL. Interpretations in \coloured{colour} have been defined for the purposes of LDL and were not present in the original semantics.  $\xi > 0$ is a constant and $[\cdot]$ the indicator function.}
\label{fig:more-comparisons}
\end{figure}

\subsection{Negation in DL2}
\label{ap:neg}

DL2 does not have an interpretation for negation - instead negation of a term is pushed inwards syntactically to the level of comparisons between terms. While this cannot be expressed as a single lambda function, it is possible to express it as a partial function that applies the same operation on LDL syntax, as defined in Figure~\ref{eq:lowerNot}. It is however not possible to define a total function as it is not possible to push the negation through parts of syntax such as a lambda.

\begin{figure}
	\begin{align*}
	\tDLtwo{\neg(a_1 \wedge a_2)}	& =  \tDLtwo{\neg a_1 \vee \neg a_2}\\
	\tDLtwo{\neg(a_1 \vee a_2)}	& =  \tDLtwo{\neg a_1 \wedge \neg a_2}\\
	\tDLtwo{\neg(a_1 \leq a_2)}	& = \tDLtwo{a_1 > a_2 }\\
	\tDLtwo{\neg(a_1 < a_2)}	& = \tDLtwo{a_1 \geq a_2 }\\
	\tDLtwo{\neg(a_1 \geq a_2)}	& = \tDLtwo{a_1 < a_2 }\\
	\tDLtwo{\neg(a_1 > a_2)}	& = \tDLtwo{a_1 \leq a_2 }\\
	\tDLtwo{\neg(a_1 == a_2)}	& = \tDLtwo{a_1 \neq a_2 }\\
	\tDLtwo{\neg(a_1 \neq a_2)}	& = \tDLtwo{a_1 == a_2 }
	\end{align*}
\caption{Definition of pushing the negation inwards present in the DL2 based DL.}
\label{eq:lowerNot}
\end{figure}

\section{Supplementary Definitions for \LJ.}

\subsection{Structural Rules for LJ}
\label{appendix:LJ-struct-rules}

\LJ{} has four structural rules: weakening on the left {\footnotesize{\LJWeakLT{}}}, exchange on the left {\footnotesize{\LJExchLT{}}}, contraction on the left {\footnotesize{\LJCtrLT{}}}, and weakening on the right {\footnotesize{\LJWeakRT{}}}. 

{\footnotesize
	\begin{gather*}
	\AxiomC{$\SequentCLJ(\ThAssumsEnv){e}$}
	\UnaryInfC{$\SequentCLJ(\ThAssumsEnv, e_1){e}$}
	\bottomAlignProof
	\DisplayProof
	\qquad
	\AxiomC{$\SequentCLJ{e}$}
	\UnaryInfC{$\SequentCLJ{e, e_1}$}
	\bottomAlignProof
	\DisplayProof
	\qquad
	\AxiomC{$\SequentCLJ(\ThAssumsEnv, e_1, e_2, \Gamma_T'){e}$}
	\UnaryInfC{$\SequentCLJ(\ThAssumsEnv, e_2, e_1, \Gamma_T'){e}$}
	\bottomAlignProof
	\DisplayProof
	\qquad
	\AxiomC{$\SequentCLJ(\ThAssumsEnv, e_1, e_1){e}$}
	\UnaryInfC{$\SequentCLJ(\ThAssumsEnv, e_1){e}$}
	\bottomAlignProof
	\DisplayProof
	\end{gather*}}

\section{Proof of Type Soundness of LDL}
\label{ap:type-soundness}

We prove by induction on the typing judgment the Theorem~\ref{th:type-sound} which reads as follows.

\begin{theorem*}[Type Soundness of LDL]
For all differentiable logics $L$ and well typed expressions $\hcTypeRel{e}{\tau}$, then for all $\hNetCtx \in \ttype{\Xi}$, $\hRandCtx \in \ttype{q(e)}$ and $\hVarCtx \in \ttype{\Delta}$
 we have $\tempty{e}^{\hNetCtx,\hRandCtx,\hVarCtx}_L \in \ttype{\tau}$.
\end{theorem*}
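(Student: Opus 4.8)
The plan is to prove the statement by structural induction on the derivation of the typing judgement $\hcTypeRel{e}{\tau}$ defined in Figure~\ref{fig:types}. I would strengthen the induction hypothesis so that it quantifies over \emph{all} admissible contexts $\hNetCtx \in \ttype{\Xi}$, $\hRandCtx \in \ttype{q(e)}$ and $\hVarCtx \in \ttype{\Delta}$ simultaneously; this generalisation is what makes the hypothesis reusable in the rules $(lam)$, $(let)$, $(forall)$ and $(exists)$, where a context is extended before the premise is typed. Each case then pairs a typing rule with its matching semantic clause from Figure~\ref{fig:generic-semantics} or Figure~\ref{fig:expr} and discharges the goal $\tempty{e}^{\hNetCtx,\hRandCtx,\hVarCtx}_L \in \ttype{\tau}$ using the definition of $\ttype{\cdot}$ from Section~\ref{subsec:types}. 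I would note at the outset that for $\tau = \BoolType$ the set $\ttype{\BoolType}$ is read as the $L$-dependent Boolean domain $\tempty{\BoolType}_L$ tabulated in Figure~\ref{fig:expr}, since Bool is deliberately omitted from the DL-independent type semantics.

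First I would dispatch the cases whose interpretation is independent of $L$. The variable rules are immediate: $(boundVar)$ holds because $\hVarCtx \in \ttype{\Delta}$ gives $\hVarCtx[\id] \in \ttype{\Delta(\id)}$, and $(networkVar)$ because $\hNetCtx \in \ttype{\Xi}$ supplies a function of type $\Real^m \rightarrow \Real^n$, which is exactly an element of $\ttype{\FunType{\VecType{m}}{\VecType{n}}}$; the constant rules $(real)$, $(index)$ and $(bool)$ hold directly from their side-conditions. For $(app)$, $(lam)$ and $(let)$ I would invoke the hypothesis on the premises and observe that set-theoretic application and abstraction respect the function space $\ttype{\tau_2}^{\ttype{\tau_1}}$; the only bookkeeping needed is that extending $\hVarCtx$ by $x \mapsto y$ with $y \in \ttype{\tau_1}$ yields a context in $\ttype{\Delta[x \rightarrow \tau_1]}$ (and that $\hRandCtx$, restricted to the smaller variable set of the subexpression, still lies in the corresponding $\ttype{q(\cdot)}$). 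The cases $(vec)$, $(lookup)$, $(add)$ and $(mul)$ then follow directly, with $(lookup)$ relying on the type $\FinType{n}$ to guarantee an in-bounds index.

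Next I would handle the clauses that depend on $L$: the Boolean constants $\top, \bot$, the connectives $\wedge, \vee, \neg, \impl$, and the comparisons $\bowtie$. For each operator and each logic $L$ in which it is interpreted, the goal reduces to checking that the entry tabulated in Figures~\ref{fig:expr}, \ref{fig:expr1} and \ref{fig:more-comparisons} has codomain exactly $\ttype{\BoolType}$. This is a finite, logic-by-logic verification that I would package as a short lemma: e.g.\ for G\"{o}del $\min$ and $\max$ preserve $[0,1]$ and $1 - \tanh|\val_1 - \val_2| \in (0,1]$; for DL2 a sum of two non-positive reals is non-positive and $-\max(\val_1 - \val_2, 0) \leq 0$, so both land in $[-\infty,0]$; for STL every clause returns an extended real, so membership in $[-\infty,\infty]$ is trivial.

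The main obstacle is the quantifier case. For a finite quantifier ($\tau = \FinType{n}$ or $\tau = \BoolType$) the interpretation unfolds to a finite iterated conjunction (resp.\ disjunction), so the result follows from the connective lemma above together with the hypothesis applied to the body $e$ at each value $d_i \in \ttype{\tau}$. The delicate subcase is the infinite quantifier ($\tau = \RealType$ or $\tau = \VecType{n}$), where the semantics is the minimised (resp.\ maximised) expected value $\mathbb{E}_{\min}[\,g(\hRandCtx[x])\,]$ of $g = \lambda y.\,\tempty{e}^{\hNetCtx,\hRandCtx,\hVarCtx[x \rightarrow y]}_L$. By the hypothesis $g$ takes values in the interval $\ttype{\BoolType}$, and I would argue that the weighted integral $\int \distr_\X(\x)\, g(\x)\, d\x$ over any region $\mathbb{B}^{\gamma}_{\x_{\min}}$ remains inside that interval: since $\distr_\X$ integrates to $1$ over the whole domain, its mass over $\mathbb{B}^{\gamma}_{\x_{\min}}$ is at most $1$, so the integral is a sub-probability average of values drawn from $\ttype{\BoolType}$ and hence stays in it (non-positive for DL2, within $[0,1]$ for G\"{o}del, unconstrained for STL's $[-\infty,\infty]$), and taking the limit $\gamma \rightarrow 0$ preserves membership because $\ttype{\BoolType}$ is closed. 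This closure-under-expectation argument is the one genuinely new ingredient beyond routine type preservation, and I expect it to be where the proof requires the most care.
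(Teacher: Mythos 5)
Your proposal follows the same overall strategy as the paper's proof in Appendix~\ref{ap:type-soundness} --- structural induction on the typing derivation, pairing each typing rule with its semantic clause --- but your version is strictly more complete than what the paper actually writes down. The paper's proof handles only the variable, constant, operator-constant, application, lambda, let, and vector cases; for the DL-dependent operators ($\wedge$, $\vee$, $\neg$, $\bowtie$, and the Boolean constants) it simply asserts that the claim ``follows by definition'' without ever checking that the interpretations tabulated in Figure~\ref{fig:expr} actually have codomain $\ttype{\BoolType}$, and --- more significantly --- it omits the quantifier cases $(forall)$ and $(exists)$ altogether, even though these are typing rules of the language and the theorem is later used (e.g.\ via Lemma~\ref{lem:expr-real} and in the soundness proofs) on expressions containing quantifiers. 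Your two additions fill exactly these gaps: the finite logic-by-logic codomain check is the right way to discharge the DL-dependent cases, and your closure-under-expectation argument for infinite quantifiers --- that a sub-probability average of values drawn from $\ttype{\BoolType}$ stays in $\ttype{\BoolType}$ because the domains $[0,1]$, $[-\infty,0]$ and $[-\infty,\infty]$ are closed, convex, and contain $0$ --- is precisely the ingredient the paper's induction would need to go through on $\Forall{\id}{\tau}{e}$ and $\Exists{\id}{\tau}{e}$. One small caution: in the limit $\gamma \rightarrow 0$ the paper's minimised expectation typically tends to $0$ for continuous densities, so your observation that $0$ lies in every Boolean domain is not incidental but essential to the case; it would be worth stating explicitly rather than leaving it implicit in the closure claim.
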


\begin{proof}
\textbf{Base Case 1.} 
Suppose we have $\hcTypeRel{e}{\VecType{m} \to \VecType{n}}$ ($e$ is a network variable). Then we have $e: \VecType{m} \to \VecType{n} \in \Xi$. But then we have $N[e] \in (\Real^n)^{\Real^m}$ by assumption.

\textbf{Base Case 2.}
Suppose we have $\hcTypeRel{e}{\tau}$ ($e$ is a bound variable). Then we have $e : \tau \in \Delta$. But then we have $\Gamma[e] \in \ttype{\tau}$ by the definition of $\Gamma$.

\textbf{Base Case 3.}
Suppose we have $\hcTypeRel{\elReal}{\RealType}$. By definition in Figure~\ref{fig:types} we also have $r \in \Real$.

\textbf{Base Case 4.}
Suppose we have $\hcTypeRel{i}{\FinType}$. By definition in Figure~\ref{fig:types} we also have $i \in \text{Index } n$.

\textbf{Base Case 5.}
Suppose we have $\hcTypeRel{b}{\BoolType}$. By definition in Figure~\ref{fig:types} we have $b \in \ttype{\BoolType}$. The interpretation of $\ttype{\BoolType}$ depends on the DL in question but is always a subset of $\Real$ and does not impact the proof.

\textbf{Base Case 6.}
Suppose we have $\hcTypeRel{e}{+ :\RealType \to \RealType \to \RealType}$. This follows by definition in Figure~\ref{fig:types}.

\textbf{Base Case 7.}
Suppose we have $\hcTypeRel{e}{\times :\RealType \to \RealType \to \RealType}$. This follows by definition in Figure~\ref{fig:types}.

\textbf{Base Case 8.} Suppose we have $\hcTypeRel{e}{\wedge,\vee,\implies :\BoolType \to \BoolType \to \BoolType}$. This follows by definition in Figure~\ref{fig:types}.

\textbf{Base Case 9.} $\hcTypeRel{e}{\not :\BoolType \to \BoolType}$ This follows by definition in Figure~\ref{fig:types}.

\textbf{Base Case 9.} $\hcTypeRel{e}{\bowtie :\RealType \to \RealType \to \BoolType}$ This follows by definition in Table~\ref{fig:types}.

\textbf{Inductive Case 1.}
Suppose $e$ is an application and we have $\hcTypeRel{\App{e_1}{e_2}}{\tau_2}$.
 By definition in Figure~\ref{fig:types} we have $ \hcTypeRel{e_1}{\FunType{\tau_1}{\tau_2}} $ and $ \hcTypeRel{e_2}{\tau_1} $.
 Then by the induction hypothesis we have $\tempty{e_1}^{\hNetCtx,\hRandCtx,\hVarCtx}_L \in \ttype{\tau_2}^{\ttype{\tau_1}}$ and $\tempty{e_2}^{\hNetCtx,\hRandCtx,\hVarCtx}_L \in \ttype{\tau_1}$.
 By application we directly have that $ \tempty{e_1}^{\hNetCtx,\hRandCtx,\hVarCtx}_L \tempty{e_2}^{\hNetCtx,\hRandCtx,\hVarCtx}_L \in \ttype{\tau_2}$.
 
 \textbf{Inductive Case 2.}
 Suppose we have $\hcTypeRel{\Lam{\id}{\tau_1}{e}}{\FunType{\tau_1}{\tau_2}}$ ($e$ is a lambda).
 Then by  definition in Figure~\ref{fig:types} we have  $ \hTypeRel{\hcNetCtx,\consNew{\id \rightarrow \tau_1}{\hcVarCtx}}{e}{\tau_2} $. 
 We want to show that $\lambda y . \tempty{e}^{\hNetCtx,\hRandCtx,\hVarCtx[x\rightarrow y]}_L \in \ttype{\tau_2}^{\ttype{\tau_1}}$.
 	 Assuming $y \in \ttype{\tau_1}$ we therefore need to show that $\tempty{e}^{\hNetCtx,\hRandCtx,\hVarCtx[x\rightarrow y]}_L \in \ttype{\tau_2}$.
 By the induction hypothesis we can now show that $\forall z:\tau \in \hcVarCtx[(x\rightarrow\tau_1)]\ .\ (\hVarCtx[x\rightarrow \tempty{e_1}^{\hNetCtx,\hRandCtx,\hVarCtx}_L])[z] \in \ttype{\tau}$.

\textbf{Inductive Case 3.}
Suppose we have $\hcTypeRel{\text{let} (x\ :\ \tau_1) = e_1\ \text{in}\ e_2}{\tau_2}$ ($e_1$ is a let).
 Then by  definition in Figure~\ref{fig:types} and the induction hypothesis we have $\tempty{e_1}^{\hNetCtx,\hRandCtx,\hVarCtx}_L \in \ttype{\tau_1}$
 and that $\hTypeRel{\hcNetCtx,\consNew{\id,\tau_1}{\hcVarCtx}}{e_2}{\tau_2}$.
  We want to show that $\tempty{e_2}^{\hNetCtx,\hRandCtx,\hVarCtx[x\rightarrow \tempty{e_1}^{\hNetCtx,\hRandCtx,\hVarCtx}_L]}_L \in \ttype{\tau_2}$.
   By induction hypothesis we have that $\forall z. \tau \in \consNew{\id \rightarrow \tau_1}{\hcVarCtx}. (\hVarCtx[x\rightarrow \tempty{e_1}^{\hNetCtx,\hRandCtx,\hVarCtx}_L])[z] \in \ttype{\tau}
$ which follows from other induction hypothesis.

\textbf{Inductive Case 4.}
Suppose $e$ is a vector. By definition in Figure~\ref{fig:types} we have $\hcTypeRel{e_1}{\RealType}  \ldots  \hcTypeRel{e_n}{\RealType}$. This follows from induction hypothesis on each $ e_i $.

\end{proof}

\section{Proofs of Soundness of LDL Relative to \LJ}
\subsection{Proof of Lemma~\ref{lem:eval}}
\label{app:lemma}

We start with a lemma needed for base case of adequacy proofs for all FDLs.
 In the proofs, we reduce $\bowtie$ to $\{== , \leq ,\neq\}$, in the cases of other comparison operators can be proved analogously.
 
 \begin{lemma*}[Soundness of FDL comparisons] If  $\validForm{e_1 \bowtie e_2}$, then for all $L$ in fuzzy differentiable logics (FDL) the following holds:
 	
 	If $\tempty{e_1 \bowtie e_2}^{\hNetCtx,\hRandCtx, \hVarCtx}_{L} = \tempty{\top}^{\hNetCtx,\hRandCtx, \hVarCtx}_{L}$ then $\tempty{e_1}^{\hNetCtx,\hRandCtx, \hVarCtx}_{L} \bowtie \tempty{e_2}^{\hNetCtx,\hRandCtx, \hVarCtx}_{L}  = \top$. 
 	
 	If $\tempty{e_1 \bowtie e_2}^{\hNetCtx,\hRandCtx, \hVarCtx}_{L} = \tempty{\bot}^{\hNetCtx,\hRandCtx, \hVarCtx}_{L}$ then $\tempty{e_1}^{\hNetCtx,\hRandCtx, \hVarCtx}_{L} \bowtie \tempty{e_2}^{\hNetCtx,\hRandCtx, \hVarCtx}_{L}  = \bot$. 
 \end{lemma*}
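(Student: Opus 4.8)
The plan is to exploit the fact that every fuzzy differentiable logic shares the same interpretation of the comparison operators --- namely that of G\"{o}del, as recorded in Figure~\ref{fig:expr} and noted in Appendix~\ref{appendix:additional-dls} --- together with the common truth constants $\tempty{\top}^{\hNetCtx,\hRandCtx, \hVarCtx}_{L} = 1$ and $\tempty{\bot}^{\hNetCtx,\hRandCtx, \hVarCtx}_{L} = 0$ across all FDLs. Consequently it suffices to verify the two implications once, uniformly in $L$. First I would observe that because $\validForm{e_1 \bowtie e_2}$ and $\bowtie : \FunType{\RealType}{\FunType{\RealType}{\BoolType}}$, both subterms must be typed $\validForm{e_i : \RealType}$; Lemma~\ref{lem:expr-real} then guarantees that $r_1 := \tempty{e_1}^{\hNetCtx,\hRandCtx, \hVarCtx}_{L}$ and $r_2 := \tempty{e_2}^{\hNetCtx,\hRandCtx, \hVarCtx}_{L}$ are genuine real numbers, so that the oracle judgement $r_1 \bowtie^* r_2$ is well-defined and the whole statement reduces to a claim about real arithmetic.

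Next I would proceed by case analysis on the comparison symbol, treating $==$, $\leq$ and $\neq$ as the representative cases; the derived operators $<, >, \geq$ from Appendix~\ref{app:comparisons} reduce to these by unfolding their definitions together with the conjunction interpretation of $L$, and so follow analogously. For $==$ I substitute $\tempty{e_1 == e_2}^{\hNetCtx,\hRandCtx, \hVarCtx}_{L} = 1 - \tanh|r_1 - r_2|$: since $\tanh$ maps $\Real$ into $(-1,1)$ and $|r_1 - r_2| \geq 0$, this value lies in $(0,1]$, equals $1$ exactly when $r_1 = r_2$ (giving $r_1 ==^* r_2 = \top$), and never equals $0$, so the $\bot$-implication holds vacuously. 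The case $\leq$ is analogous: the interpretation attains $1$ precisely when the order holds and never reaches $0$. For $\neq$ I use $\tempty{e_1 \neq e_2}^{\hNetCtx,\hRandCtx, \hVarCtx}_{L} = 1 - [r_1 = r_2]$, which is genuinely two-valued, equalling $1$ iff $r_1 \neq r_2$ and $0$ iff $r_1 = r_2$, so both implications hold non-vacuously.

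The calculations themselves are routine, so the only real subtlety --- and the point I would emphasise --- is that for the $\tanh$-based predicates the value $0 = \tempty{\bot}_L$ is never actually attained (because $\tanh$ approaches, but never equals, its asymptotes), and hence the $\bot$-direction of the statement is satisfied vacuously rather than by a true equivalence; it is only the indicator-based $\neq$ that exercises both directions genuinely. A related care-point is that the interpretation of $\leq$ must be read with the signed difference $r_1 - r_2$ rather than $|r_1 - r_2|$ if it is to track the asymmetric order $\leq^*$ faithfully; with that reading the $\top$-direction is immediate. Finally I would note that this base case is precisely the ingredient invoked in the proof of Lemma~\ref{l:s1}, which is where the restriction to well-formed comparisons and the reality of $r_1, r_2$ are actually consumed.
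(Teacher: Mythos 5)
Your proposal is correct and follows essentially the same route as the paper's proof in Appendix~\ref{app:lemma}: reduce both sides to real numbers via Lemma~\ref{lem:expr-real}, perform a case analysis on $==$, $\leq$ and $\neq$, and dispatch the remaining comparisons by unfolding their definitions in terms of these. If anything you are more precise than the paper, whose proof reasons as though the interpretations were exact difference tests and thereby glosses over the point you highlight --- that for the $\tanh$-based predicates the value $\tempty{\bot}_L = 0$ is never attained, so the $\bot$-direction of those cases holds only vacuously.
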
 
 

\begin{proof} 
The proof of Lemma~\ref{lem:eval} proceeds by case-reasoning on the comparison operators.
Since $e_1 \bowtie e_2$ is well-typed, each of $e_1, e_2$ has type $\RealType$. In that case, $\tempty{e_i}^{\semCtx} \in \Real$ by Lemma~\ref{lem:expr-real}. 

\textbf{Case 1.} If $\bowtie$ is $==$, then  $\tempty{e_1 == e_2}^{\semCtx}_{FDL} = 1$ means $\tempty{e_1}^{\semCtx}_{FDL} - \tempty{e_2}^{\semCtx}_{FDL} = 0$,
 that is, $\tempty{e_1}^{\semCtx}_{FDL} = \tempty{e_2}^{\semCtx}_{FDL}$. But then, $\tempty{e_1}^{\semCtx}_{FDL} ==^* \tempty{e_2}^{\semCtx}_{FDL} = \top$. Similarly, $\tempty{e_1 == e_2}_{FDL} = 0$ means $\tempty{e_1}^{\semCtx}_{FDL} - \tempty{e_2}^{\semCtx}_{FDL} \neq 0$, that is, $\tempty{e_1}^{\semCtx} \neq \tempty{e_2}^{\semCtx}_{FDL}$. But then, $\tempty{e_1}^{\semCtx}_{FDL} ==^* \tempty{e_2}^{\semCtx}_{FDL} = \bot$. 
	
	\textbf{Case 2.} If $\bowtie$ is $\leq$, then  $\tempty{e_1 \leq e_2}^{\semCtx}_{FDL} = 1$ means $\tempty{e_1}^{\hNetCtx,\hRandCtx, \hVarCtx}_{FDL} - \tempty{e_2}^{\semCtx}_{FDL} \leq 0$, that is, $\tempty{e_1}^{\semCtx}_{FDL} \leq \tempty{e_2}^{\semCtx}_{FDL}$. But then, $\tempty{e_1}^{\semCtx}_{FDL} \leq^* \tempty{e_2}^{\semCtx}_{FDL} = \top$. Similarly, $\tempty{e_1 \leq e_2}^{\semCtx}_{FDL} = 0$ means $\tempty{e_1}^{\semCtx}_{FDL} - \tempty{e_2}^{\semCtx}_{FDL} > 0$, that is, $\tempty{e_1}^{\semCtx}_{FDL} > \tempty{e_2}^{\semCtx}_{FDL}$. But then, $\tempty{e_1}^{\semCtx}_{FDL} \leq^* \tempty{e_2}^{\semCtx}_{FDL} = \bot$. 
	
	\textbf{Case 3.} If $\bowtie$ is $\neq$, then  $\tempty{e_1 \neq e_2}^{\semCtx}_{FDL} = 1$ means $[\tempty{e_1}^{\semCtx}_{FDL} = \tempty{e_2}^{\semCtx}_{FDL}] = 0$, that is, $\tempty{e_1}^{\semCtx}_{FDL} \neq \tempty{e_2}^{\semCtx}_{FDL}$. But then, $\tempty{e_1}^{\semCtx}_{FDL} \neq^* \tempty{e_2}^{\semCtx}_{FDL} = \top$. Similarly, $\tempty{e_1 \leq e_2}^{\semCtx}_{FDL} = 0$ means $[\tempty{e_1}^{\semCtx}_{FDL} = \tempty{e_2}^{\semCtx}_{FDL}] = 1$, that is, $\tempty{e_1}^{\semCtx}_{FDL} = \tempty{e_2}^{\semCtx}_{FDL}$. But then, $\tempty{e_1}^{\semCtx}_{FDL} \neq^* \tempty{e_2}^{\semCtx}_{FDL} = \bot$. 
	
	The remaining comparisons in $\bowtie$ are all defined using the already proven comparisons.
\end{proof}

%

\subsection{Proof of Soundness for G\"{o}del  DL}
\label{app:proof-godel}
The below proof is for Lemma~\ref{l:s1}.
\begin{proof}
	The mutually  inductive proof proceeds by case-reasoning on the shape of the formula $e$, and by induction on the structure of $\tGodel{}^{\semCtx}$. As finite quantifiers are defined via conjunctions, we only cover the case of infinite quantifiers explicitly. 
	As the proof is mutually inductive between two parts of Lemma~\ref{l:s1}, we will refer to them accordingly as 
	Lemma~\ref{l:s1} (Part 1) and Lemma~\ref{l:s1} (Part 2). 
	
	We start with Lemma~\ref{l:s1} (Part 1).
	The case of $e = \bot$ is automatically excluded, so our first base case is:

	\textbf{Base Case 1.} 
	Suppose $e = \top$. But we know $\SequentCLJ(){\top}$ by the rule $(\top)$.
	
	\textbf{Base Case 2.} 
	Suppose $e = e_1 \bowtie e_2$. As $\tGodel{e_1 \bowtie e_2}^{\semCtx} = 1$, 
	by Lemma~\ref{lem:eval}, we have $\tempty{e_1}^{\semCtx}_{FDL} \bowtie^* \tempty{e_2}^{\semCtx}_{FDL} = \top$. But then, we can derive $\SequentCLJ(){ e_1 \bowtie e_2}$ by the rules $ \LJArithR$ and $(\top)$.

	\textbf{Inductive Case 1.} Suppose $e = \neg e_1$, and therefore $\tGodel{e}^{\semCtx} = 1-\tGodel{e_1}^{\semCtx} = 1$. This is only possible when $\tGodel{e_1}^{\semCtx} = 0$. 
	Then, by Lemma~\ref{l:s1} (Part 2), we have
	$\SequentCLJ(e_1){}$. But then, by the rule $\LJNegR$, we can derive $\SequentCLJ(){\neg e_1}$.
	
	\textbf{Inductive Case 2.} Suppose $e =  e_1 \land e_2$, and therefore $\tGodel{e}^{\semCtx} = \min (\tGodel{e_1}^{\semCtx}, \tGodel{e_2}^{\semCtx}) = 1$. This means that  
	$\tGodel{e_1}^{\semCtx} = \tGodel{e_2}^{\semCtx} = 1 $. And, by the induction hypothesis, we have that $\SequentCLJ(){e_1}$ and $\SequentCLJ(){e_2}$. But then, by the rule $\LJConjR$, we have $\SequentCLJ(){e_1 \land e_2}$.
	
	\textbf{Inductive Case 3.} Suppose $e =  e_1 \impl e_2$, and therefore $\tGodel{e}^{\semCtx} = \max (1-\tGodel{e_1}^{\semCtx}, \tGodel{e_2}^{\semCtx}) = 1$. This means that at least one of the following is true: $1-\tGodel{e_1}^{\semCtx} = 1$ or $\tGodel{e_2}^{\semCtx} = 1$. 
	In the first case we have $\tGodel{e_1}^{\semCtx} = 0$. 
	Then, by Lemma~\ref{l:s1} (Part 2), we have
	$\SequentCLJ(e_1){}$. 
	In the second case, by the induction hypothesis, we have $\SequentCLJ(){e_2}$.
	In either case, using one of the weakening rules, we can obtain $\SequentCLJ(\Gamma_T, e_1){e_2}$. This allows us to use the rule $\LJImplR$ to derive $\SequentCLJ{e_1 \impl e_2}$.

	\textbf{Inductive Case 4.} Suppose $e =  \forall x: \tau. e_1$, and therefore we have $\tGodel{\forall x: \tau .\ e_1 }^{\semCtx} = 1$.  This means that the minimum expected value of $\lambda y. \tGodel{e_1}^{\hNetCtx,\hRandCtx, \hVarCtx[x\rightarrow y]}$ is $1$.  
	Seeing that $1$ is the top value, it means that $\tGodel{e_1}^{\hNetCtx,\hRandCtx, \hVarCtx[x\rightarrow y]} = 1$ for all inputs $y$. But then, by the induction hypothesis, we have $\SequentCLJ(){e_1}$. We therefore can deduce $\SequentCLJ(){\forall x. e_1}$ ($x \notin FV(\Gamma_T)$, as $\Gamma_T$ is empty). 
	
	\textbf{Inductive Case 5.} Suppose $e =  \exists x: \tau. e_1$,  and therefore we have $\tGodel{\exists x: \tau .\ e_1 }^{\hNetCtx,\hRandCtx, \hVarCtx} = 1$. This means that the maximum expected value of $\lambda y. \tGodel{e_1}^{\hNetCtx,\hRandCtx, \hVarCtx[x\rightarrow y]}$ is $1$. 
	Tt means that $\tGodel{e_1}^{\hNetCtx,\hRandCtx, \hVarCtx[x\rightarrow y]} = 1$ for at least one input $w$. But then, by the induction hypothesis, we have $\SequentCLJ(){e_1[y / x]}$. We therefore can deduce $\SequentCLJ(){\exists x. e_1}$.
	
	We now move on to the proof of Lemma~\ref{l:s1} (Part 2). The case of $e = \top$ is automatically excluded, so our first base case is:
	
	\textbf{Base Case 1.} 
	Suppose $e = \bot$. But we know $\SequentCLJ(\bot){}$ by the $(\bot)$.
	
	\textbf{Base Case 2.} 
	Suppose $e = e_1 \bowtie e_2$, where $\bowtie$ is a comparison operator. Moreover, $\tGodel {  e_1 \bowtie e_2}^{\semCtx} = 0$. 
	By Lemma~\ref{lem:eval}, we have $\tempty{e_1}^{\semCtx}_{FDL} \bowtie^* \tempty{e_2}^{\semCtx}_{FDL} = \bot$.
	But then we obtain a proof for $\SequentCLJ(\Gamma_T,e_1 \leq e_2){ }$ by the rules $ \LJArithL $ and ($\bot$).
	
	\textbf{Inductive Case 1.} 
	Suppose $e = \neg e_1$, and therefore $\tGodel{e}^{\semCtx} = 1-\tGodel{e_1}^{\semCtx} = 0$. This is only possible when $\tGodel{e_1}^{\semCtx} = 1$. 
	Then $\tGodel{e_1}^{\semCtx} \in \tGodel{\top}^{\semCtx}$. Then by Lemma~\ref{l:s1} (Part 1) we have $\SequentCLJ(){e_1 }$. But then by $\LJNegL$ we can derive $\SequentCLJ(\neg e_1){ }$.

	\textbf{Inductive Case 2.} 
	Suppose $e =  e_1 \land e_2$, and therefore $\tGodel{e}^{\semCtx} = \min (\tGodel{e_1}^{\semCtx}, \tGodel{e_2}^{\semCtx}) = 0$. This means that at least one of the following is true:
	$\tGodel{e_1}^{\semCtx} = 0$ or $ \tGodel{e_2}^{\semCtx} = 0 $. And, by the induction hypothesis, we have that $\SequentCLJ(e_1){}$ or $\SequentCLJ(e_2){}$. But then, by the rule $\LJConjL$, we have $\SequentCLJ(e_1 \land e_2){}$.
	
	\textbf{Inductive Case 3.} 
	Suppose $e =  e_1 \impl e_2$, and therefore ${\tGodel{e}^{\semCtx} = \max (1-\tGodel{e_1}^{\semCtx}, \tGodel{e_2}^{\semCtx}) = 0}$. This means that both $1-\tGodel{e_1}^{\semCtx} = 0$ and $\tGodel{e_2}^{\semCtx}  = 0$. 
	This gives us $\tGodel{e_1}^{\semCtx} = 1$, therefore by Lemma~\ref{l:s1} (Part 1) we obtain $\SequentCLJ(){e_1}$. For $\tGodel{e_2}^{\semCtx}  = 0$, we use the induction hypothesis and conclude $\SequentCLJ(e_2){}$.
	Using the rule $\LJImplL$, we obtain $\SequentCLJ(e_1 \impl e_2){}$.
	
	\textbf{Inductive Case 4.} Suppose $e =  \forall x. e_1$, and therefore we have $\tGodel{\forall x: \tau .\ e_1 }^{\hNetCtx,\hRandCtx, \hVarCtx} = 0$.  This means that the minimum expected value of $\lambda y. \tGodel{e_1}^{\hNetCtx,\hRandCtx, \hVarCtx[x\rightarrow y]}$ is $0$. 
	It means that $\tGodel{e_1}^{\hNetCtx,\hRandCtx, \hVarCtx[x\rightarrow y]} = 0$ for some input $w$. But then, by the induction hypothesis, we have $\SequentCLJ(e_1[y / w]){}$. We therefore can deduce $\SequentCLJ(\forall x. e_1){}$. 
	
	\textbf{Inductive Case 5.} Suppose $e =  \exists x. e_1$,  and therefore we have $\tGodel{\exists x: \tau .\ e_1 }^{\hNetCtx,\hRandCtx, \hVarCtx} = 0$. This means that the maximum expected value of $\lambda y. \tGodel{e_1}^{\hNetCtx,\hRandCtx, \hVarCtx[x\rightarrow y]}$ is $0$. 
	Seeing that $0$ is the bottom value, it means that $\tGodel{e_1}^{\hNetCtx,\hRandCtx, \hVarCtx[x\rightarrow y]} = 0$ for all inputs. But then, by the induction hypothesis, we have $\SequentCLJ(e_1){}$. We therefore can deduce $\SequentCLJ(\exists x. e_1){}$, ($x \notin FV(\Gamma_T)$, as $\Gamma_T$ is empty).
\end{proof}

\subsection{Proof of Soundness for Product DL}
We first define analogous theorem for another fuzzy logic - product, denoted $\tproduct{}$.

\begin{lemma}\label{th:p1}
	Given a formula $e$, for any contexts $N, \Gamma, Q$ the following hold:
	\begin{enumerate}
		\item if $\tproduct{e}^{\hNetCtx,\hRandCtx, \hVarCtx} = \tproduct{\top}^{\hNetCtx,\hRandCtx, \hVarCtx} $  then $\SequentCLJ(){e}$.
		\item if   
		$\tproduct{e}^{\hNetCtx,\hRandCtx, \hVarCtx} = \tproduct{\bot}^{\hNetCtx,\hRandCtx, \hVarCtx}$ 
		 then $\SequentCLJ(e){}$.  
	\end{enumerate}
\end{lemma}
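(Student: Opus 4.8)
The plan is to follow the proof of Lemma~\ref{l:s1} for G\"{o}del DL almost verbatim, carrying out a mutual induction between Part~1 (where the value equals $\tproduct{\top}^{\hNetCtx,\hRandCtx, \hVarCtx} = 1$) and Part~2 (where it equals $\tproduct{\bot}^{\hNetCtx,\hRandCtx, \hVarCtx} = 0$), with case analysis on the shape of the formula $e$. The base cases $e = \top$ and $e = \bot$ are discharged by the rules $(\top)$ and $(\bot)$ exactly as before, and the comparison case $e = e_1 \bowtie e_2$ is handled by Lemma~\ref{lem:eval}, which is stated for every fuzzy DL and so applies unchanged to the product logic (whose comparison semantics coincides with G\"{o}del's). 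Thus the only cases requiring genuine checking are the connectives $\neg, \conj, \disj, \impl$ and the quantifiers.

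The key observation driving the whole argument is that, although the product connectives are computed by multiplication rather than by $\min$ and $\max$, they take the extremal truth values $1$ and $0$ under exactly the same conditions on their arguments as the corresponding G\"{o}del connectives. Concretely, for $v_1, v_2 \in [0,1]$ the product interpretations of $\conj$, $\disj$, $\impl$ and $\neg$ are $v_1 \cdot v_2$, $\ v_1 + v_2 - v_1 v_2 = 1 - (1-v_1)(1-v_2)$, $\ 1 - v_1 + v_1 v_2 = 1 - v_1(1 - v_2)$, and $1 - v$ respectively. Hence $\conj$ equals $1$ iff $v_1 = v_2 = 1$ and equals $0$ iff $v_1 = 0$ or $v_2 = 0$; $\disj$ equals $1$ iff $v_1 = 1$ or $v_2 = 1$ and equals $0$ iff $v_1 = v_2 = 0$; $\impl$ equals $1$ iff $v_1 = 0$ or $v_2 = 1$ and equals $0$ iff $v_1 = 1$ and $v_2 = 0$; and $\neg$ is identical to G\"{o}del. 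Since these are precisely the equivalences exploited in the G\"{o}del proof, each connective case transfers directly: for instance, for $e = e_1 \conj e_2$ in Part~1 we deduce $\tproduct{e_1}^{\semCtx} = \tproduct{e_2}^{\semCtx} = 1$, invoke the two induction hypotheses, and close with $\LJConjR$; in Part~2 we deduce $\tproduct{e_1}^{\semCtx} = 0$ or $\tproduct{e_2}^{\semCtx} = 0$ and close with $\LJConjL$. The $\disj$, $\impl$ and $\neg$ cases use $\LJImplR$/$\LJImplL$ and $\LJNegR$/$\LJNegL$ in the same manner, appealing across to the other part of the lemma wherever a subformula is forced to the opposite truth value (as in the G\"{o}del implication cases).

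For the quantifiers the argument is identical to the G\"{o}del case and does not depend on the interpretation of the connectives at all, since the semantics of $\forall$ and $\exists$ is uniform across logics: the minimised, respectively maximised, expected value of $\lambda y . \tproduct{e_1}^{\hNetCtx,\hRandCtx, \hVarCtx[x \rightarrow y]}$. As $1$ is the top and $0$ the bottom of $[0,1]$, a minimised expected value of $1$ forces the body to evaluate to $1$ at every point, and a maximised expected value of $0$ forces it to $0$ everywhere, with the two dual combinations handled symmetrically. Applying the induction hypothesis to the body and then $\LJAllR$, $\LJExR$, $\LJAllL$ or $\LJExL$ (noting $x \notin FV(\Gamma_T)$, as $\Gamma_T$ is empty) yields the required derivation.

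I expect the only real care to be needed in the second paragraph: verifying the ``only if'' directions of the extremal characterisations of $\disj$ and $\impl$, where one must use that a product of factors in $[0,1]$ vanishes iff some factor is $0$ and equals $1$ iff every factor is $1$. This is elementary but is the single place where the product semantics genuinely differs from G\"{o}del. As with Lemma~\ref{l:s1}, the quantifier cases inherit the same informal reading of the minimised and maximised expectation, so no new subtlety arises there. Soundness of the product DL in the sense of Definition~\ref{df:sound} then follows as an immediate corollary, exactly as Theorem~\ref{th:s} follows from Lemma~\ref{l:s1}.
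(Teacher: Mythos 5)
Your proposal is correct and follows essentially the same route as the paper's own proof: a mutual induction on the two parts with case analysis on $e$, the comparison base case discharged by Lemma~\ref{lem:eval}, the connective cases reduced to the observation that product connectives hit the extremal values $1$ and $0$ under the same conditions on their arguments as G\"{o}del's (e.g.\ $1 - v_1 + v_1v_2 = 1$ iff $v_1 = 0$ or $v_2 = 1$, and $=0$ iff $v_1 = 1$ and $v_2 = 0$), and the quantifier cases handled verbatim via the minimised/maximised expectation semantics followed by the appropriate \LJ{} rules. The only difference is presentational: you factor the extremal-value characterisations out as a single explicit observation, whereas the paper carries out the corresponding algebra inline in each inductive case.
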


\begin{proof}
	The mutually  inductive proof proceeds by case-reasoning on the shape of the formula $e$, and by induction on the structure of $\tproduct{}^{\semCtx}$, a choice of one of the DLs.
		As the proof is mutually inductive between two parts of Lemma~\ref{th:p1}, we will refer to them accordingly as 
	Lemma~\ref{th:p1} (Part 1) and Lemma~\ref{th:p1} (Part 2). 
	
	We start with the first part of Lemma~\ref{th:p1}.
	The case of $e = \bot$ is automatically excluded, so our first base case is:
	
	\textbf{Base Case 1.} 
	Suppose $e = \top$. But we know $\SequentCLJ(){\top}$ by the rule $(\top)$.
	
	\textbf{Base Case 2.} 
		Suppose $e = e_1 \bowtie e_2$. Moreover, $\tproduct {e_1 \bowtie e_2}^{\semCtx} = 1$. 
	By Lemma~\ref{lem:eval}, we have $\tempty{e_1}^{\semCtx}_{FDL} \bowtie^* \tempty{e_2}^{\semCtx}_{FDL} = \top$.
	But then, we can derive $\SequentCLJ(){ e_1 \bowtie e_2}$ by the rules $ \LJArithR$ and $(\top)$. 
	
	\textbf{Inductive Case 1.} Suppose $e = \neg e_1$, and therefore $\tproduct{e}^{\semCtx} = 1-\tproduct{e_1}^{\semCtx} = 1$. This is only possible when $\tproduct{e_1} = 0$. 
	Then, by Lemma~\ref{th:p1} (Part 2), we have
	$\SequentCLJ(e_1){}$. But then, by the rule $\LJNegR$, we can derive $\SequentCLJ(){\neg e_1}$.
	
	\textbf{Inductive Case 2.} Suppose $e =  e_1 \land e_2$, and therefore $\tproduct{e}^{\semCtx} = \tproduct{e_1}^{\semCtx} \times \tproduct{e_2}^{\semCtx} = 1$. Since $1$ is the top value that means we have $\tproduct{e_1}^{\semCtx} = \tproduct{e_2}^{\semCtx} = 1 $ And, by the induction hypothesis, we have that $\SequentCLJ(){e_1}$ and $\SequentCLJ(){e_2}$. But then, by the rule $\LJConjR$, we have $\SequentCLJ(){e_1 \land e_2}$.
	
	\textbf{Inductive Case 3.} Suppose $e =  e_1 \impl e_2$, and therefore $\tproduct{e}^{\semCtx} = 1-\tproduct{e_1}^{\semCtx}+ \tproduct{e_1}^{\semCtx} \times \tproduct{e_2}^{\semCtx} = 1$.
	This means that $\tproduct{e_1}^{\semCtx}= \tproduct{e_1}^{\semCtx} \times \tproduct{e_2}^{\semCtx}$. From this at least one of the following is true $\tproduct{e_1}^{\semCtx} = 0$ or $\tproduct{e_2}^{\semCtx} = 1 $.
	In the first case by Lemma~\ref{th:p1} (Part 2) we have $\SequentCLJ(e_1){}$. In the second case, by induction hypothesis we have $\SequentCLJ(){e_2}$.
	Now using one of the weakening rules, we can obtain $\SequentCLJ(e_1){e_2}$. This allows us to use the rule $\LJImplR$ to derive $\SequentCLJ(){e_1 \impl e_2}$.

	\textbf{Inductive Case 4.} Suppose $e =  \forall x. e_1$, and therefore we have $\tproduct{\forall x: \tau .\ e_1 }^{\hNetCtx,\hRandCtx, \hVarCtx} = 1$.  This means that the minimum expected value of $\lambda y. \tproduct{e_1}^{\hNetCtx,\hRandCtx, \hVarCtx[x\rightarrow y]}$ is $1$.   
	Seeing that $1$ is the top value, it means that $\tproduct{e_1}^{\hNetCtx,\hRandCtx, \hVarCtx[x\rightarrow y]} = 1$ for all inputs $y$. But then, by the induction hypothesis, we have $\SequentCLJ(){e_1}$. We therefore can deduce $\SequentCLJ(){\forall x. e_1}$ ($x \notin FV(\Gamma_T)$, as $\Gamma_T$ is empty). 
	
	\textbf{Inductive Case 5.} Suppose $e =  \exists x. e_1$,  and therefore we have $\tproduct{\exists x: \tau .\ e_1 }^{\hNetCtx,\hRandCtx, \hVarCtx} = 1$. This means that the maximum expected value of $\lambda y. \tproduct{e_1}^{\hNetCtx,\hRandCtx, \hVarCtx[x\rightarrow y]}$ is $1$. 
	It means that $\tproduct{e_1}^{\hNetCtx,\hRandCtx, \hVarCtx[x\rightarrow y]} = 1$ for at least one input $w$. But then, by the induction hypothesis, we have $\SequentCLJ(){e_1[y / x]}$. We therefore can deduce $\SequentCLJ(){\exists x. e_1}$.
	
	We now move on to the proof of the second part of Lemma~\ref{th:p1}. The case of $e = \top$ is automatically excluded, so our first base case is:
	
	\textbf{Base Case 1.} 
	Suppose $e = \bot$. But we know $\SequentCLJ(\bot){}$ by the $(\bot)$.
	
	\textbf{Base Case 2.} 
		Suppose $e = e_1 \bowtie e_2$, where $\bowtie$ is a comparison operator, and $e_1, e_2$ are real numbers. Moreover, $\tproduct {  e_1 \bowtie e_2}^{\semCtx} = 0$. 
	By Lemma~\ref{lem:eval}, we have $\tempty{e_1}^{\semCtx}_{FDL} \bowtie^* \tempty{e_2}^{\semCtx}_{FDL} = \bot$.
	But then we obtain a proof for $\SequentCLJ(\Gamma_T,e_1 \leq e_2){ }$ by the rules $ \LJArithL $ and ($\bot$).
	
	\textbf{Inductive Case 1.} 
	Suppose $e = \neg e_1$, and therefore $\tproduct{e}^{\semCtx} = 1-\tproduct{e_1}^{\semCtx} = 0$. This is only possible when $\tproduct{e_1}^{\semCtx} = 1$. 
	Then $\tproduct{e_1}^{\semCtx} \in \tproduct{\top}^{\semCtx}$. Then by Lemma~\ref{th:p1} (Part 1) we have $\SequentCLJ(){e_1 }$. But then by $\LJNegL$ we can derive $\SequentCLJ(\neg e_1){ }$.

	\textbf{Inductive Case 2.} 
	Suppose $e =  e_1 \land e_2$, and therefore $\tproduct{e}^{\semCtx} = \tproduct{e_1}^{\semCtx} \times \tproduct{e_2}^{\semCtx} = 0$.
	This means that at east one of the following holds - $ \tproduct{e_1}^{\semCtx} = 0 $ or $ \tproduct{e_2}^{\semCtx} = 0 $. In both cases respectively by induction hypothesis we have $\SequentCLJ(e_1){}$ or $\SequentCLJ(e_2){}$. But then, by the rule $(\LJConjL)$, we have $\SequentCLJ(e_1 \land e_2){}$.
	
	\textbf{Inductive Case 3.} 
	Suppose $e =  e_1 \impl e_2$, and therefore $\tproduct{e}^{\semCtx} = 1-\tproduct{e_1}^{\semCtx}+ \tproduct{e_1}^{\semCtx} \times \tproduct{e_2}^{\semCtx} = 0$. 
	This means that $\tproduct{e_1}^{\semCtx}-1= \tproduct{e_1}^{\semCtx} \times \tproduct{e_2}^{\semCtx}$. However as $\tproduct{e_1}^{\semCtx} \times \tproduct{e_2}^{\semCtx} \in [0,1]$ that means that $\tproduct{e_1}^{\semCtx}-1 \in [0,1]$ and therefore $\tproduct{e_1}^{\semCtx} = 1$ and by Lemma~\ref{th:p1} (Part 1) we obtain $\SequentCLJ(){e_1}$. From this we have $\tproduct{e_2}^{\semCtx} = 0$ and we use the induction hypothesis and conclude $\SequentCLJ(e_2){}$.
	Using the rule $\LJImplL$, we obtain $\SequentCLJ(e_1 \impl e_2){}$.

	\textbf{Inductive Case 4.} Suppose $e =  \forall x. e_1$, and therefore we have $\tproduct{\forall x: \tau .\ e_1 }^{\hNetCtx,\hRandCtx, \hVarCtx} = 0$.  This means that the minimum expected value of $\lambda y. \tproduct{e_1}^{\hNetCtx,\hRandCtx, \hVarCtx[x\rightarrow y]}$ is $0$. 
	It means that $\tproduct{e_1}^{\hNetCtx,\hRandCtx, \hVarCtx[x\rightarrow y]} = 0$ for some input $w$. But then, by the induction hypothesis, we have $\SequentCLJ(e_1[y / x]){}$. We therefore can deduce $\SequentCLJ(\forall x. e_1){}$. 
	
	\textbf{Inductive Case 5.} Suppose $e =  \exists x. e_1$,  and therefore we have $\tproduct{\exists x: \tau .\ e_1 }^{\hNetCtx,\hRandCtx, \hVarCtx} = 0$. This means that the maximum expected value of $\lambda y. \tproduct{e_1}^{\hNetCtx,\hRandCtx, \hVarCtx[x\rightarrow y]}$ is $0$. 
	Seeing that $0$ is the bottom value, it means that $\tproduct{e_1}^{\hNetCtx,\hRandCtx, \hVarCtx[x\rightarrow y]} = 0$ for all inputs. But then, by the induction hypothesis, we have $\SequentCLJ(e_1){}$. We therefore can deduce $\SequentCLJ(\exists x. e_1){}$, ($x \notin FV(\Gamma_T)$, as $\Gamma_T$ is empty).
\end{proof}

Its corollary is:

\begin{theorem}[Soundness of Product DL]
	Given a formula $e$, for any contexts $N, \Gamma, Q$ if$\tproduct{e}^{\hNetCtx,\hRandCtx, \hVarCtx} = \tproduct{\top}^{\hNetCtx,\hRandCtx, \hVarCtx} = 1$ then $\SequentCLJ(){e}$.
\end{theorem}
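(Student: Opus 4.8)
The plan is to derive this directly from Lemma~\ref{th:p1} (Part 1), which has just been established by mutual induction. First I would observe that the Product DL interpretation sends $\top$ to $1$ (Figure~\ref{fig:expr1}), so the equality $\tproduct{\top}^{\hNetCtx,\hRandCtx, \hVarCtx} = 1$ holds definitionally for every choice of contexts. The hypothesis of the theorem, $\tproduct{e}^{\hNetCtx,\hRandCtx, \hVarCtx} = \tproduct{\top}^{\hNetCtx,\hRandCtx, \hVarCtx} = 1$, is therefore precisely the antecedent of Lemma~\ref{th:p1} (Part 1). Applying that lemma immediately yields $\SequentCLJ(){e}$, which is exactly the desired conclusion. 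Since the lemma is already quantified over all contexts $N, \Gamma, Q$, nothing further is needed once a fixed triple is assumed.

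Consequently this is a genuine one-line corollary: all the substantive work is contained in Lemma~\ref{th:p1}, and the theorem merely extracts the ``truth implies provability'' half of its two-part statement and records it under the name of soundness, in the sense of Definition~\ref{df:sound}. The only thing to verify is the trivial identification of $\tproduct{\top}^{\hNetCtx,\hRandCtx, \hVarCtx}$ with $1$, which makes the explicit numerical hypothesis of the theorem coincide with the symbolic hypothesis of the lemma.

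The main obstacle is therefore not in the corollary at all, but in Lemma~\ref{th:p1}, whose proof must treat both directions simultaneously by mutual induction. In particular, the connective cases of Part 1 unavoidably call on Part 2: for instance, in the case $e = e_1 \impl e_2$ the condition $1 - \tproduct{e_1}^{\hNetCtx,\hRandCtx, \hVarCtx} + \tproduct{e_1}^{\hNetCtx,\hRandCtx, \hVarCtx} \times \tproduct{e_2}^{\hNetCtx,\hRandCtx, \hVarCtx} = 1$ can be met via $\tproduct{e_1}^{\hNetCtx,\hRandCtx, \hVarCtx} = 0$, a \emph{falsity} condition on $e_1$ that is discharged using Part 2 and then fed through weakening into $\LJImplR$; the negation case produces the dual dependency. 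Once Lemma~\ref{th:p1} is in hand, the corollary follows with no additional argument.
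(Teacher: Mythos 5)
Your proposal is correct and matches the paper exactly: the paper also presents this theorem as an immediate corollary of Lemma~\ref{th:p1} (Part 1), with all substantive work done in that mutually inductive lemma. Your identification of $\tproduct{\top}^{\hNetCtx,\hRandCtx,\hVarCtx} = 1$ and your account of why the implication and negation cases force the mutual dependence between the two parts are both accurate reflections of the paper's argument.
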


\subsection{Proof of Soundness for DL2}
\label{app:dl2-adequacy}

We now provide the proof for Theorem~\ref{th:d1}. It is important to remember that DL2 translation does not include a stand-alone negation operator or implication - therefore this proof is done for a limited version of \LJ.	We start with a helper lemma.

\begin{lemma}[Adequacy of intervals and arithmetic operations in DL2]\label{lem:eval1} 
If  $\validForm{e_1 \bowtie e_2} : \RealType$ and  $\tDLtwo{e_1 \bowtie e_2}^{\semCtx} = 0$ then $\tempty{e_1}^{\semCtx}_{DL2} \bowtie \tempty{e_2}^{\semCtx}_{DL2} = \top$. 

\end{lemma}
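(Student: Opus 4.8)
The plan is to prove this as the DL2 counterpart of Lemma~\ref{lem:eval}, noting that only the $\top$-polarity is asserted here: since DL2 provides neither a stand-alone negation nor an implication operator, the soundness argument for DL2 (Theorem~\ref{th:d1}) needs only the positive direction, so no dual ``$\bot$'' clause is required. First I would record that, because $e_1 \bowtie e_2$ is well-formed, both $e_1$ and $e_2$ must have type $\RealType$; hence by Lemma~\ref{lem:expr-real} the values $r_1 := \tempty{e_1}^{\semCtx}_{DL2}$ and $r_2 := \tempty{e_2}^{\semCtx}_{DL2}$ are genuine real numbers. The proof then proceeds by case analysis on the comparison operator $\bowtie \in \{==, \leq, \neq, <, \geq, >\}$.

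For the three primitive operators I would simply unfold the DL2 interpretation from Figure~\ref{fig:expr} and Figure~\ref{fig:more-comparisons} and read off the arithmetic condition forced by the value being $0$ (recall $\tempty{\top}_{DL2} = 0$). For $==$ we have $\tempty{e_1 == e_2}^{\semCtx}_{DL2} = -|r_1 - r_2|$, which is $0$ exactly when $r_1 = r_2$, so the oracle returns $r_1 ==^* r_2 = \top$. For $\leq$ we have $-\max(r_1 - r_2, 0) = 0$ iff $r_1 \leq r_2$. For $\neq$ we have $-\xi\,[r_1 = r_2] = 0$ (with $\xi > 0$ constant) iff $[r_1 = r_2] = 0$, i.e.\ $r_1 \neq r_2$. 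In each case the oracle evaluates to $\top$, as required.

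The derived operators $\geq, <, >$ I would reduce to the primitive ones. The case $\geq$ is immediate, since $\tempty{e_1 \geq e_2}_{DL2}$ is defined as the application of $\tempty{\leq}_{DL2}$ with arguments swapped. The cases $<$ and $>$ are where the only genuine subtlety lies: $\tempty{e_1 < e_2}_{DL2}$ unfolds to $\tDLtwo{(e_1 \leq e_2) \land (e_1 \neq e_2)}$, and since DL2 interprets $\land$ by addition this becomes $\tempty{e_1 \leq e_2}_{DL2} + \tempty{e_1 \neq e_2}_{DL2}$. The key observation — and the step I expect to be the main obstacle to state rigorously — is that every DL2 comparison takes values in the truth domain $\tempty{\BoolType}_{DL2} = [-\infty, 0]$ (a fact guaranteed by Theorem~\ref{th:type-sound} applied to the Bool-typed subexpressions), so each summand is non-positive; a sum of two non-positive reals equals $0$ iff both summands are $0$. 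Applying the already-established primitive cases then yields $r_1 \leq r_2$ and $r_1 \neq r_2$, i.e.\ $r_1 <^* r_2 = \top$; the case $>$ is symmetric, using $\geq$ and $\neq$. This non-positivity argument for the $+$-encoded conjunction is the crux, as it is the one place the proof leans on the specific shape of the DL2 Boolean domain rather than on a direct computation.
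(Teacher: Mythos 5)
Your proof is correct and takes essentially the same approach as the paper's: both first note that well-typedness forces $e_1, e_2$ to have type $\RealType$ (so their interpretations are real numbers), then proceed by case analysis on $\bowtie$, unfolding the DL2 interpretations of $==$, $\leq$ and $\neq$ to read off the required arithmetic condition, with the oracle returning $\top$ in each case. The one place you go beyond the paper is the derived operators: the paper's proof dismisses $\geq$, $<$ and $>$ with the single sentence that they ``are all defined using the already proven comparisons,'' whereas you supply the justification that sentence silently relies on --- namely that the $+$-interpreted DL2 conjunction in the encodings of $<$ and $>$ equals $0$ iff both conjuncts equal $0$, which holds precisely because every DL2 comparison takes values in the non-positive domain $[-\infty, 0]$. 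So your version is, if anything, slightly more complete than the paper's own proof, and the non-positivity observation you identify as the crux is exactly the missing step.
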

\begin{proof} 
 Since $e_1 \bowtie e_2$ is well-typed each of $e_1, e_2$ has type $\RealType$.
  If $\bowtie$ is $==$, then  $\tDLtwo{e_1 == e_2}^{\semCtx} = 0$ means $\tDLtwo{e_1}^{\semCtx} - \tDLtwo{e_2}^{\semCtx} = 0$, that is, $\tDLtwo{e_1}^{\semCtx} = \tDLtwo{e_2}^{\semCtx}$. But then, $\tempty{e_1}^{\semCtx}_{DL2}  == \tempty{e_2}^{\semCtx}_{DL2}  = \top$. 
	
	If $\bowtie$ is $\leq$, then  $\tDLtwo{e_1 \bowtie e_2}^{\semCtx} = 0$ means $\tDLtwo{e_1}^{\semCtx} - \tDLtwo{e_2}^{\semCtx} \leq 0$, that is, $\tDLtwo{e_1}^{\semCtx} \leq \tDLtwo{e_2}^{\semCtx}$. But then, $\tempty{e_1}^{\semCtx}_{DL2}  \leq \tempty{e_2}^{\semCtx}_{DL2}  = \top$.
	
	If $\bowtie$ is $\neq$, then  $\tDLtwo{e_1 \neq e_2}^{\semCtx} = 0$ means $[\tDLtwo{e_1}^{\semCtx} == \tDLtwo{e_2}^{\semCtx}] = 0$, that is, $\tDLtwo{e_1}^{\semCtx} \neq \tDLtwo{e_2}^{\semCtx}$. But then, $\tempty{e_1}^{\semCtx}_{DL2}  \neq \tempty{e_2}^{\semCtx}_{DL2}  = \top$.  
	
	The remaining comparisons in $\bowtie$ are all defined using the already proven comparisons.
\end{proof}

We now move on to the proof of Theorem~\ref{th:d1}.

\begin{theorem*}[Soundness of DL2]
	Given a formula $e$, taking DL2 with just connectives $\land, \lor$ and quantifiers, for any contexts $N, \Gamma, Q$:
	if $\tDLtwo{e}^{\hNetCtx,\hRandCtx, \hVarCtx} = \tDLtwo{\top}^{\hNetCtx,\hRandCtx, \hVarCtx}$  then $\SequentCLJ(){e}$. 
\end{theorem*}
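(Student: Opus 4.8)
The plan is to follow the proof scheme already used for the fuzzy logics (Lemma~\ref{l:s1} and Lemma~\ref{th:p1}) while exploiting a structural simplification specific to DL2. Because DL2 interprets neither negation nor implication, the fragment of well-formed formulae we must treat is generated only by comparisons, $\land$, $\lor$ and the two quantifiers. In particular, the \emph{mutual} induction that was essential for Gödel and Product --- whose $\neg$ and $\impl$ cases in Part~1 appealed to the $\bot$-direction in Part~2 --- is no longer needed here. I would therefore prove a single statement by induction on the shape of $e$: if $\tDLtwo{e}^{\semCtx} = \tDLtwo{\top}^{\semCtx} = 0$ then $\SequentCLJ(){e}$, with no companion $\bot$-lemma. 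Note that, within this fragment, every subexpression evaluates to a \emph{finite} value in $[-\infty,0]$, since comparisons of finite reals are finite and $\land,\lor$ together with the quantifier extrema preserve finiteness; this removes any boundary subtlety at $-\infty$.

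For the base cases, $e = \top$ is immediate from the rule $(\top)$, while $e = e_1 \bowtie e_2$ follows from the helper Lemma~\ref{lem:eval1}: from $\tDLtwo{e_1 \bowtie e_2}^{\semCtx} = 0$ we obtain $\tempty{e_1}^{\semCtx}_{DL2} \bowtie^{*} \tempty{e_2}^{\semCtx}_{DL2} = \top$, and the rules $\LJArithR$ and $(\top)$ close the goal. For $\land$, I would use $\tDLtwo{e_1 \land e_2}^{\semCtx} = \tDLtwo{e_1}^{\semCtx} + \tDLtwo{e_2}^{\semCtx}$ together with the fact that both summands are non-positive: a sum of non-positive reals is $0$ only when both are $0$, so $\tDLtwo{e_1}^{\semCtx} = \tDLtwo{e_2}^{\semCtx} = 0$, the induction hypothesis yields $\SequentCLJ(){e_1}$ and $\SequentCLJ(){e_2}$, and $\LJConjR$ finishes the case.

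The disjunction case is the one demanding the most care, since DL2 reads $\lor$ multiplicatively as $\tDLtwo{e_1 \lor e_2}^{\semCtx} = -\tDLtwo{e_1}^{\semCtx} \times \tDLtwo{e_2}^{\semCtx}$: here $-v_1 v_2 = 0$ forces $v_1 = 0$ or $v_2 = 0$, so the induction hypothesis gives a proof of $e_1$ or of $e_2$, which the right-introduction rule for disjunction in $\LJ$ turns into $\SequentCLJ(){e_1 \lor e_2}$. For the quantifiers I would reuse the extremal reading of the expectation operators established in the Gödel proof: $\tDLtwo{\forall x{:}\tau.\,e_1}^{\hNetCtx,\hRandCtx,\hVarCtx} = \mathbb{E}_{\min}[\lambda y.\tDLtwo{e_1}^{\hNetCtx,\hRandCtx,\hVarCtx[x\rightarrow y]}]$ equals the top value $0$ only if the global minimum of the body is $0$; as $0$ is the maximum of $[-\infty,0]$, the body is $0$ for every $y$, and $\LJAllR$ applies. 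Dually, $\mathbb{E}_{\max} = 0$ for $\exists$ witnesses a single $w$ with body value $0$, discharged by $\LJExR$. The main obstacle I anticipate is concentrated in these last two cases: making the extremal interpretation of $\mathbb{E}_{\min}/\mathbb{E}_{\max}$ on the unbounded domain $[-\infty,0]$ fully rigorous and consistent with the convention used for Gödel, and ensuring the presented $\LJ$ calculus indeed supplies the right-disjunction rules needed to close the $\lor$ case.
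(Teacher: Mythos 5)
Your proposal is correct and follows essentially the same route as the paper's proof: a single (non-mutual) structural induction, with Lemma~\ref{lem:eval1} closing the comparison base case via the rules \textbf{(Arith-R)} and $(\top)$, the additivity argument forcing both conjuncts to $0$ in the $\land$ case, and the extremal reading of $\mathbb{E}_{\min}/\mathbb{E}_{\max}$ over the domain $[-\infty,0]$ for the two quantifier cases. The only divergence is in your favour: the paper's own proof silently omits the $\lor$ case even though the statement includes it, and your multiplicative argument ($-v_1 \times v_2 = 0$ forces $v_1 = 0$ or $v_2 = 0$) handles it correctly; your concern about the missing right-disjunction rule is warranted, since Figure~\ref{fig:rules-CLJ} does not display one, but it is resolved by the paper's appeal to the standard formulation of \LJ, which supplies it.
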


\begin{proof}
	\textbf{Base Case 1.} 
	Suppose $e = \top$. But we know $\SequentCLJ(){\top}$ by the rule $(\top)$.
	
	\textbf{Base Case 2.} 
	Suppose $e = e_1 \bowtie e_2$, where $\bowtie$ is a comparison operator, and $e_1, e_2$ are real numbers. Moreover, $\tDLtwo {  e_1 \bowtie e_2}^{\semCtx} = 0$. 
	By Lemma~\ref{lem:eval1}, we have $\tempty{e_1}^{\hNetCtx,\hRandCtx, \hVarCtx}_{DL2} \bowtie^* \tempty{e_2}^{\hNetCtx,\hRandCtx, \hVarCtx}_{DL2} = \top$.
	But then we obtain a proof for $\SequentCLJ{e_1 \bowtie e_2}$ by the rules $ \LJArithR $ and ($\top$).
	
	\textbf{Inductive Case 1.} Suppose $e =  e_1 \land e_2$, and therefore ${tDLtwo{e}^{\semCtx} = \tDLtwo{e_1}^{\semCtx} + \tDLtwo{e_2}^{\semCtx} = 0}$.That means we have $\tDLtwo{e_1}^{\semCtx} = \tDLtwo{e_2}^{\semCtx} = 0 $ And, by the induction hypothesis, we have that $\SequentCLJ(){e_1}$ and $\SequentCLJ(){e_2}$. But then, by the rule $\LJConjR$, we have $\SequentCLJ(){e_1 \land e_2}$.
	
	\textbf{Inductive Case 2.} Suppose $e =  \forall x. e_1$, and therefore we have $\tDLtwo{\forall x: \tau .\ e_1 }^{\hNetCtx,\hRandCtx, \hVarCtx} = 0$.  This means that the minimum expected value of $\lambda y. \tDLtwo{e_1}^{\hNetCtx,\hRandCtx, \hVarCtx[x\rightarrow y]}$ is $0$.  
	Seeing that $0$ is the top value, it means that $\tDLtwo{e_1}^{\hNetCtx,\hRandCtx, \hVarCtx[x\rightarrow y]} = 0$ for all inputs $y$. But then, by the induction hypothesis, we have $\SequentCLJ(){e_1}$. We therefore can deduce $\SequentCLJ(){\forall x. e_1}$ ($x \notin FV(\Gamma_T)$, as $\Gamma_T$ is empty). 
	
	\textbf{Inductive Case 3.} Suppose $e =  \exists x. e_1$,  and therefore we have $\tDLtwo{\exists x: \tau .\ e_1 }^{\hNetCtx,\hRandCtx, \hVarCtx} = 0$. This means that the maximum expected value of $\lambda y. \tDLtwo{e_1}^{\hNetCtx,\hRandCtx, \hVarCtx[x\rightarrow y]}$ is $0$. 
	It means that $\tDLtwo{e_1}^{\hNetCtx,\hRandCtx, \hVarCtx[x\rightarrow y]} = 0$ for at least one input $w$. But then, by the induction hypothesis, we have $\SequentCLJ(){e_1[y / x]}$. We therefore can deduce $\SequentCLJ(){\exists x. e_1}$.
\end{proof}

\section{Further Discussion of Logical and Geometric Properties}
\label{app:properties}

	The DLs defined as part of LDL have the following properties:
	\begin{itemize}
		\item $\tDLtwo{}$ is commutative, scale-invariant, associative, sound (for a limited \LJ) and has shadow-lifting. It is not idempotent and does not have quantifier commutativity. While its semantics for logical connectives is weakly smooth, $\tDLtwo{}$ is not weakly smooth due to the presence of $\min$ and $\max$ in translation of comparisons. While no proofs about the properties were provided in~\citet{fischer2019dl} the translation of conjunction in DL2 and product based fuzzy logic is standard addition. The more common properties including idempotence, commutativity, associativity and min-max boundedness in given domain are known and proven. Considering the semantics of conjunction is addition it is simple enough to reason about its partial derivatives and therefore shadow-lifting. 
		\item $\tGodel{}$ is idempotent, commutative, scale-invariant, associative, sound and has quantifier commutativity. It does not have shadow-lifting and it is not weakly smooth.   The semantics of conjunction is a minimum between the two elements which prohibits shadow-lifting as well as smoothness by definition. The simpler properties are well known and proved in fuzzy logic literature as it is an established fuzzy logic.
		\item $\tlukasiewicz{}$ is  commutative, sound (for a syntax excluding either negation or implication) and associative. It is not idempotent, scale-invariant, weakly smooth and does not have shadow-lifting or quantifier commutativity.  The presence of maxima in the semantics of conjunction however naturally prohibits shadow lifting and smoothness. Similarly to Gödel it is an established fuzzy logic and many of the properties of its semantics of conjunction have been proven.
		\item $\tyager{}$ is  commutative, sound (for a limited syntax) and associative. It is not idempotent, scale-invariant, weakly smooth and does not have shadow-lifting or quantifier commutativity.  Similarly to the previous two the presence of maxima prohibits shadow lifting and smoothness. This is another one of the well established fuzzy logics for which majority of the properties have been already investigated by the community.
		\item $\tproduct{}$ is  commutative, associative, sound and has shadow-lifting. It is not idempotent,  scale-invariant, does not have quantifier commutativity.  While its semantics of logical connectives are weakly smooth $\tproduct{}$ is not weakly smooth due to the semantics of comparisons.  Furthermore as its semantics of conjunction is the arithmetic operation of multiplication it is easy to reason about its partial derivates and therefore shadow lifting. Since it is a well-established fuzzy logic its properties the more common logical properties have been extensively studied already.
		\item $\tSTL{}$ is  idempotent, commutative, scale-invariant, weakly smooth and has shadow-lifting. It is not associative or sound and does not have quantifier commutativity. The properties of this translation have been proven in the paper it originates from as indicated in Table~\ref{tab:properties}, aside from quantifier commutativity which has been added in this paper. The translation has been changed for the purposes of LDL however the proofs would remain analogous. Furthermore it is the only DL for which the soundness does not hold, for either full or limited \LJ. 

\item Only $\tGodel{}$'s quantifiers commute with connectives, as its connectives are defined via $max$ and $min$ (commuting with maxima and minima used in quantifier interpretation). 

	\end{itemize}


%
%

\end{document}